\newcites{latex}{REFERENCE}
\newcommand{\algmargin}{\the\ALG@thistlm}
\newlength{\whilewidth}
\algnewcommand{\parState}[1]{\State%
  \parbox[t]{\dimexpr\linewidth-\algmargin}{\strut #1\strut}}
\newlength{\indexwidth}
\algnewcommand{\parNumState}[2]{\State{#1}%
  \parbox[t]{\dimexpr\linewidth-\algmargin-\dimexpr \indexwidth}{\strut #2\strut}}
\newtheoremstyle{mystyle}
  {}
  {}
  {}
  {}
  {\bfseries}
  {:}
  { }
  {}
\theoremstyle{mystyle}
\newtheorem{assumption}{Assumption}
\newtheorem{thm}{Theorem}
\newtheorem{lem}[thm]{Lemma}
\theoremstyle{definition}
\newtheorem{defn}{Definition}
\newtheorem{rem}{Remark}
\newacronym{NE}{NE}{Nash equilibrium}
\newacronym{GNE}{GNE}{Generalized Nash Equilibrium}
\newacronym{CNE}{CNE}{coupled Nash equilibrium}
\newacronym{DNE}{DNE}{decoupled Nash equilibrium}
\newacronym{NBS}{NBS}{Nash bargaining solution}
\newacronym{NFG}{NFG}{network formation game}
\newacronym{GraSP}{GraSP}{gradient support pursuit}
\newacronym{ADMM}{ADMM}{alternating direction method of multipliers}
\newacronym{PALM}{PALM}{proximal alternating linearized minimization}
\newacronym{WAC}{WAC}{wide-area control}
\newacronym{LQR}{LQR}{linear-quadratic regulator}
\newacronym{CARE}{CARE}{cross-coupled algebraic Riccati equations}
\newacronym{KKT}{KKT}{Karush-Kuhn-Tucker}
\newacronym{LMI}{LMI}{Linear Matrix Inequality}
\newacronym{LTI}{LTI}{linear time invariant}
\newacronym{LPV}{LPV}{linear parameter varying}
\newacronym{NCS}{NCS}{networked control system}
\newacronym{KL}{KL}{Kurdyka-{\L}ojasiewicz}
\begin{document}
%
\title{Game-Theoretic Mixed $H_2/H_{\infty}$ Control with Sparsity Constraint for Multi-agent Networked Control Systems}
%
%
%
\author{Feier Lian, Aranya Chakrabortty~\IEEEmembership{Senior Member,~IEEE}, and Alexandra Duel-Hallen~\IEEEmembership{Fellow,~IEEE}\thanks{The authors are with the Department of Electrical and Computer Engineering, North Carolina State University, Raleigh, NC 27695 USA (e-mail: flian2@ncsu.edu; aranya.chakrabortty@ncsu.edu; sasha@ncsu.edu)} \thanks{Financial support from the NSF Grant EECS 1544871 is gratefully acknowledged.}}

\glsresetall

\maketitle
\begin{abstract}

Multi-agent \glspl{NCS} are often subject to model uncertainty and are limited by large communication cost, associated with feedback of data between the system nodes. 
To provide robustness against model uncertainty and to reduce the communication cost, this paper investigates the mixed $H_2/H_{\infty}$ control problem for \gls{NCS} under the sparsity constraint.
First, \gls{PALM} is employed to solve the centralized social optimization where the agents have the same optimization objective. Next, we investigate a sparsity-constrained noncooperative game, which accommodates different control-performance criteria of different agents,
 and propose a best-response dynamics algorithm based on \gls{PALM} that converges to an approximate \gls{GNE} of this game. A special case of this game, where the agents have the same $H_2$ objective, produces a partially-distributed social optimization solution.
We validate the proposed algorithms using a network with unstable node dynamics and demonstrate the superiority of the proposed \gls{PALM}-based method to a previously investigated sparsity-constrained mixed $H_2/H_{\infty}$ controller.

\end{abstract}

\begin{IEEEkeywords}
sparse controller, $H_2$ and $H_{\infty}$ control, Linear Matrix Inequality (LMI), model uncertainty, nonconvex nonsmooth optimization, game theory
\end{IEEEkeywords}

%
\IEEEpeerreviewmaketitle

\section{Introduction}
\glsresetall

Recent research on multi-agent networks has proposed various methods for reducing communication cost for control by using sparse $H_2$ control designs \cite{mihailo,lian2017game,javad,dorfler2014sparsity,lin2017co,matni2016regularization}. However, most of this work ignores the effects of model uncertainties, which are bound to arise in most practical large-scale systems since the network operating conditions and topology change frequently over time. Even if the topology is fixed, the exact model parameters are not always available to the designer.
The sparse control design in such cases must also be robust against the uncertainties. Robust designs have been reported in several recent papers such as \cite{lidstrom2016optimal,arastoo2016closed,bahavarnia2017sparse,bahavarnia2017state} using $H_{\infty}$ control, which is suitable for handling norm-bounded uncertainties in the system dynamics. In particular, \cite{arastoo2016closed,bahavarnia2017sparse} employ both $H_2$ and $H_{\infty}$ control, thus balancing the $H_2$ performance of the nominal system and robustness objective. However, optimizing the $H_2$ performance under the $H_{\infty}$ and sparsity constraints has not been investigated by other researchers. Moreover, while a global control performance cost was optimized, this metric did not address the individual objectives of multiple agents under uncertainty.

Control of NCS under uncertainties has received significant attention 
recently in various domains, such as wide-area control of power systems \cite{lian2017game}, multi-robot coordination, multi-access broadcast channel, vehicle formation, and wireless sensor networks\cite{seuken2008formal,ogren2004cooperative}. In these problems, game theory becomes a powerful tool, with different control inputs modeled as game players, where each player aims to optimize its individual objective using an associated control policy. Differential games have been investigated for uncertain multi-agent systems, and algorithms for finding an equilibrium point have been proposed based on solving a set of coupled optimization problems. 
The works \cite{jungers2008bounded,de2017finite} extend Nash-type differential game in \cite{basar85} by finding robust Nash strategies while either considering polytopic uncertainty or formulating uncertain external disturbance as a fictitious player. 
The works \cite{mukaidani2009robust,mukaidani2014h,mukaidani2015stackelberg,mukaidani2016dynamic} model the uncertainty using stochastic differential equations, and the Nash strategies are found by solving cross-coupled matrix equations through necessary optimality conditions or \Gls{KKT} conditions. 
Recently, reinforcement learning has been applied to multi-agent control Nash games when the system parameters are completely or partially unknown \cite{vrabie2010integral, vamvoudakis2015non,song2017off,vamvoudakis2017game}. However, these reported game-theoretic designs did not address any sparsity constraint.

In this paper, we investigate controller designs for multi-agent systems that aim to reduce $H_2$ cost under $H_{\infty}$ and sparsity constraints. Both social optimization and a nonconvex game, where the $H_2$-objectives of the agents are same and different, respectively, are investigated. We model our uncertainty as a norm-bounded parameteric uncertainty that translates into an $H_{\infty}$ constraint as in \cite{Lian:2018aa,bahavarnia2017sparse}. We employ the \gls{PALM} \cite{lin2017co}, which has been shown to be effective for optimization for nonconvex nonsmooth problems \cite{bolte2014proximal} and was utilized in \cite{lin2017co} in a sparsity-constrained output-feedback co-design problem.
First, a centralized sparsity-constrained mixed $H_2/H_{\infty}$ controller, which represents the social optimization, is addressed.
Preliminary results on this topic were recently reported in our conference paper \cite{Lian:2018aa}, where we developed a centralized controller under the sparsity and $H_{\infty}$ constraints using a greedy \gls{GraSP} method \cite{bahmani2013greedy}. However, the algorithm in \cite{Lian:2018aa} requires the knowledge of an initial stabilizing feedback gain that satisfies the sparsity constraint. We eliminate this requirement and show the advantages of the \gls{PALM}-based controller in this paper over that in \cite{Lian:2018aa} in terms of the quadratic $H_2$-cost.


Next, we extend the proposed design to the multi-agent scenario where each agent designs its own part of the feedback matrix, subject to a shared global $H_{\infty}$-norm and sparsity constraints. Since each agent has different individual cost, the control design is modeled as a noncooperative game with shared constraints. We develop a numerical algorithm to find the \gls{GNE} \cite{paccagnan2016distributed} of this game. The proposed algorithm has partially-distributed computation, i.e., in the first stage, each player computes its own feedback matrix while in the second stage, the sparse links are chosen globally based on the results from the first stage. Third, assuming all players of the game have the same $H_2$-optimization objective, we develop a potential game that yields a partially-distributed implementation of the social optimization. We perform numerical simulations to demonstrate the performance of the proposed algorithms and discuss their convergence properties.

\begin{table}[!t]
\centering
\smallskip 
\caption{Notation}
\begin{tabular}{p{0.15\columnwidth}p{0.75\columnwidth}}
\hline
{\bf Term} & {\bf Definition}\\
\hline
$\mathbf M {\succ}0({\succeq}0) $ & Matrix $\mathbf M$ is positive definite (semidefinite)\\
\hline
$\mathbf M {\prec}0({\preceq} 0)$ & $\mathbf M$ is negative definite (semidefinite)\\
\hline
$\sigma_{\mathrm{max}}(\mathbf M)$ & maximum singular value of $\mathbf M$\\
\hline
$||{\mathbf K}||_F$ & Frobenius norm of the matrix ${\mathbf K}$, defined by $\sqrt{\mathrm{trace}({\mathbf K}^{\mathrm{T}}{\mathbf K})}$. \\
\hline
$\mathrm{card}({\mathbf K})$ & Cardinality of matrix $\mathbf K$, defined by the number of nonzero elements in $\mathbf K$.\\
\hline
$\nabla_{\mathbf K}J({\mathbf K})$ & The gradient of the scalar function $J({\mathbf K})$ with respect to the matrix ${\mathbf K}$. Assuming ${\mathbf K}\in \mathbb{R}^{m\times n}$, $\nabla_{\mathbf K}J({\mathbf K})$ is given by a $m\times n$ matrix with the elements $[\nabla_{\mathbf K}J({\mathbf K})]_{ij} = \partial J/ \partial K_{ij}$ .\\ 
\hline
$[{\mathbf K}]_s$ & The matrix obtained by preserving only the $s$ largest-magnitude entries of the matrix ${\mathbf K}$ and setting all other entries to zero. \\
\hline 
$H_2$ norm & The system $H_2$ norm in the time domain is $||G||_2=\left(\int_0^{\infty}{\left[\mathrm{trace}(H(t)^TH(t))\right]dt}\right)^{1/2}$, where $H(t)$ is the impulse response of the system.\\
\hline
$H_{\infty}$ norm & The system $H_{\infty}$ norm $||G||_{\infty}={\mathrm{sup}}_{\omega}~\sigma_{\mathrm{max}}(G(j\omega))$, where $G(j\omega)$ is the system transfer matrix.\\
\hline
\end{tabular}
\label{palm-notation:tb}
\end{table}
 
The main contributions of the paper can, therefore, be summarized as:
\begin{itemize}
\item {Development and analysis of a centralized, sparsity-constrained mixed $H_2/H_{\infty}$ controller for social optimization of multi-agent systems with norm-bounded uncertainty.}
\item {Development of game-theoretic, partially-distributed algorithms that aim to minimize the $H_2$-norms of the agents' transfer functions under shared sparsity and $H_{\infty}$-norm constraints. }
\end{itemize}

The rest of the paper is organized as follows. Section \ref{palm-centra:sec} presents the system model with parametric uncertainty and develops a centralized \gls{PALM} algorithm for social optimization using sparsity-constrained mixed $H_2/H_{\infty}$ control. Section \ref{palm-game:sec} describes a multi-agent system with parametric uncertainty, proposes a noncooperative game with shared sparsity and $H_{\infty}$ constraints, and develops partially-distributed numerical algorithms for this game and for social optimization.
Section \ref{palm-num:sec} demonstrates effectiveness of the proposed algorithms using numerical simulations, and discusses their complexities and convergence properties. Section \ref{palm-concl:sec} discusses future directions and concludes the paper.

Throughout the paper, matrices are denoted with boldface capital letters. If $\mathbf M$ is a symmetric matrix, the upper block matrices are sometimes denoted by $*$ to save space. Some notation used is summarized in Table \ref{palm-notation:tb}.

\section{\gls{PALM} algorithm for centralized sparsity-constrained mixed $H_2/H_{\infty}$ control}
\label{palm-centra:sec}

\subsection{System model and mixed $H_2/H_{\infty}$ control}

Consider the following linear time-invariant system with model uncertainty:
\begin{align}
\dot{\mathbf{x}}(t) &= \underbrace{(\mathbf A+\Delta \mathbf A)}_{\hat{\mathbf A}}\mathbf x(t) + \underbrace{(\mathbf B+\Delta \mathbf B)}_{\hat{\mathbf B}}\mathbf u(t) + \mathbf B_2 \mathbf w_2(t)\nonumber\\
\mathbf{z}_2(t) &= \mathbf C_2 \mathbf{x}(t) + \mathbf D_2 \mathbf{u}(t) + \mathbf D_{22} \mathbf w_2(t)\nonumber\\
\mathbf y(t) &= \mathbf C \mathbf x(t),
\label{uncertain:eq}
\end{align}
where $\mathbf x(t)\in \mathbb{R}^{n\times 1}$ is the state vector, $\mathbf u(t) \in \mathbb{R}^{m\times 1}$ is the control input vector, $\mathbf{w}_2(t) \in \mathbb{R}^{m_2 \times 1}$ is the exogenous input, $\mathbf{z}_2(t)\in \mathbb{R}^{p_2\times 1}$ is the performance output, and $\mathbf y(t){\in} \mathbb{R}^{p \times 1}$ is the measured output.
\noindent The matrices $\mathbf A$ and $\mathbf B$ are the nominal values of the state and input matrices, respectively,  while $\Delta \mathbf A$ and $\Delta \mathbf B$ model the respective uncertainties. We make the following assumptions:
\begin{assumption}
\label{controllable:assumption}
(i) The pair $(\mathbf A, \mathbf B)$ is controllable, $(\mathbf C, \mathbf A)$ is observable,  $(\mathbf C_2, \mathbf A)$ is observable.\\
(ii) $\Delta \mathbf A$ and $\Delta \mathbf B$ have the form \cite{bahavarnia2017sparse}
\begin{equation}
[\Delta \mathbf A  ~\Delta \mathbf B]= \mathbf B_1 \Delta \boldsymbol \delta [\mathbf C_1 ~\mathbf D_1], \label{deltaA:eq}
\end{equation}
where $\mathbf B_1{\in} \mathbb{R}^{n\times m_1}$, $\mathbf C_1{\in} \mathbb{R}^{p_1\times n}$, $\mathbf D_1 {\in} \mathbb{R}^{p_1\times m}$ are known matrices, and $\Delta \boldsymbol \delta {\in} \mathbb{R}^{m_1\times p_1}$ is an unknown matrix which is norm-bounded, satisfying $\Delta \boldsymbol\delta^T\Delta \boldsymbol \delta \preceq 1/\gamma^2\mathbf I$ for any scalar $\gamma>0$.
\end{assumption}
\begin{assumption}
\label{bounded:assumption}
Matrices $\mathbf C_2$ and $\mathbf D_2$ have the following form:
\begin{equation}
\label{palm-h2:eq}
\mathbf{C}_2 = \begin{bmatrix}\mathbf{C}_{2}^1, \mathbf{0} \end{bmatrix}, \mathbf{D}_2 = \begin{bmatrix} \mathbf{0} \\ \mathbf{D}_{2}^2 \end{bmatrix}, \mathbf{C}_2^T\mathbf{D}_2=0. 
 \end{equation}
\end{assumption}

Using assumptions \ref{controllable:assumption} and \ref{bounded:assumption}, the system (\ref{uncertain:eq}) can be expressed as the feedback interconnection of the following two subsystems:
\begin{align}
\label{sigma:eq}
&\Sigma: \begin{cases}
\dot{\mathbf x}(t) = \mathbf A \mathbf x(t) + \mathbf B\mathbf u(t) + \mathbf B_1\mathbf w_1(t) + \mathbf B_2\mathbf w_2(t) \\
\mathbf z_1(t) = \mathbf{C}_1\mathbf{x}(t) + \mathbf D_1 \mathbf u(t)\\
\mathbf z_2(t) = \mathbf C_2\mathbf x(t) + \mathbf D_2\mathbf u(t) + \mathbf D_{22} \mathbf w_2(t)\\
\mathbf y(t) = \mathbf C\mathbf x(t)\\
\end{cases}\raisetag{1\baselineskip}
\end{align}
\begin{align}
&\Sigma_K: \begin{cases}
\mathbf w_1(t) = \Delta \boldsymbol \delta\mathbf z_1(t),
\end{cases}
\label{sigmak:eq}
\end{align}
where $\mathbf z_1(t)\in \mathbb{R}^{p_1\times 1}$, $\mathbf w_1(t)\in \mathbb{R}^{m_1\times 1}$.

Our goal in this section is to find a linear static output-feedback controller $\mathbf u(t) = -\mathbf K \mathbf y(t)$ that stabilizes (\ref{uncertain:eq}), i.e., guarantees $||T_{z_1w_1}(\mathbf{K}) ||_{\infty} < \gamma$.  Following \cite{kami2008gradient}, this $H_{\infty}$-norm constraint can be transformed into an \gls{LMI} condition as stated in the following theorem.

\begin{thm}
The $H_{\infty}$-norm constraint $||T_{z_1w_1}(\mathbf K)||_{\infty} < \gamma$ holds if and only if there exists an $\mathbf{X}=\mathbf X^T$ that satisfies the \gls{LMI}
\begin{align}
& \begin{bmatrix}
\mathbf{A}_{cl}(\mathbf K)\mathbf{X} + \mathbf{X}\mathbf{A}_{cl}(\mathbf K)^T+\mathbf{B}_1\mathbf{B}_1^T & \mathbf{XC}_{cl1}(\mathbf K)^T\\
\mathbf{C}_{cl1}(\mathbf K)\mathbf{X} & -\gamma^2 \mathbf{I}
\end{bmatrix} \prec 0 \nonumber\\
& \mathbf{X} \succ 0
\label{LMI_inf:eq}
\end{align}
\noindent where $\mathbf{A}_{cl}(\mathbf K)= \mathbf{A}-\mathbf{BKC}$, $\mathbf{C}_{cli}(\mathbf K) = \mathbf{C}_i - \mathbf{D}_i\mathbf{KC}$ for $ i=1,2$. 
\end{thm}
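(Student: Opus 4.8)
The plan is to recognize the statement as an instance of the Bounded Real Lemma (BRL) applied to the closed-loop map from $\mathbf w_1$ to $\mathbf z_1$, and to present the equivalence through a storage-function (KYP) argument. First I would form the closed-loop realization: substituting the output-feedback law $\mathbf u(t)=-\mathbf{KC}\mathbf x(t)$ into $\Sigma$ in (\ref{sigma:eq}) and suppressing $\mathbf w_2$ (which does not enter $\mathbf z_1$) yields $\dot{\mathbf x}=\mathbf A_{cl}(\mathbf K)\mathbf x+\mathbf B_1\mathbf w_1$ and $\mathbf z_1=\mathbf C_{cl1}(\mathbf K)\mathbf x$, so that $T_{z_1w_1}(\mathbf K)(s)=\mathbf C_{cl1}(\mathbf K)(s\mathbf I-\mathbf A_{cl}(\mathbf K))^{-1}\mathbf B_1$ is strictly proper (no direct feedthrough from $\mathbf w_1$ to $\mathbf z_1$). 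The claim is then exactly that this transfer matrix has $H_\infty$-norm below $\gamma$ iff the stated LMI in $\mathbf X=\mathbf X^T$ is feasible.

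Next I would reduce the block LMI to its Riccati-inequality form by taking a Schur complement on the (always negative definite) $(2,2)$ block $-\gamma^2\mathbf I$, obtaining $\mathbf A_{cl}\mathbf X+\mathbf X\mathbf A_{cl}^T+\mathbf B_1\mathbf B_1^T+\tfrac{1}{\gamma^2}\mathbf X\mathbf C_{cl1}^T\mathbf C_{cl1}\mathbf X\prec0$, then set $\mathbf P=\mathbf X^{-1}\succ0$ and congruence-transform to $\mathbf P\mathbf A_{cl}+\mathbf A_{cl}^T\mathbf P+\mathbf C_{cl1}^T\mathbf C_{cl1}+\tfrac{1}{\gamma^2}\mathbf P\mathbf B_1\mathbf B_1^T\mathbf P\prec0$. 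For sufficiency I would take the storage function $V(\mathbf x)=\mathbf x^T\mathbf P\mathbf x$; completing the square in $\mathbf w_1$ shows that this Riccati inequality is precisely the condition that $\dot V<\gamma^2\|\mathbf w_1\|^2-\|\mathbf z_1\|^2$ holds along trajectories. Since the top-left block already forces $\mathbf A_{cl}\mathbf X+\mathbf X\mathbf A_{cl}^T\prec0$ with $\mathbf X\succ0$, the matrix $\mathbf A_{cl}(\mathbf K)$ is Hurwitz; integrating the dissipation inequality from $t=0$ with $\mathbf x(0)=0$ and exploiting the strict margin then yields $\|T_{z_1w_1}(\mathbf K)\|_\infty<\gamma$.

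For necessity I would invoke the converse KYP/BRL construction: assuming $\mathbf A_{cl}$ Hurwitz and $\|T_{z_1w_1}(\mathbf K)\|_\infty<\gamma$, the associated $H_\infty$ Hamiltonian matrix has no eigenvalues on the imaginary axis, so the corresponding algebraic Riccati equation admits a stabilizing solution $\mathbf P\succeq0$; a standard perturbation upgrades this to some $\mathbf P\succ0$ satisfying the strict Riccati inequality, and reversing the Schur-complement step recovers a feasible $\mathbf X=\mathbf P^{-1}$. Alternatively, since $\sigma_{\mathrm{max}}$ is invariant under transposition, one may apply the observability-form BRL to the transposed realization $(\mathbf A_{cl}^T,\mathbf C_{cl1}^T,\mathbf B_1^T)$ and rename the certificate to obtain the controllability-form LMI directly.

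I expect the necessity direction to be the main obstacle: the sufficiency and the Schur/congruence algebra are routine, but producing the certificate $\mathbf X$ from the frequency-domain norm bound requires the spectral-factorization / stabilizing-Riccati-solution machinery, together with the care needed to pass from a nonstrict inequality to the strict LMI. A clean alternative that sidesteps most of this is simply to cite the BRL in the controllability-Gramian form as in \cite{kami2008gradient} and verify that the closed-loop data $(\mathbf A_{cl}(\mathbf K),\mathbf B_1,\mathbf C_{cl1}(\mathbf K))$ match its hypotheses.
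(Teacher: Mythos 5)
Your argument is correct, but note that the paper itself offers no proof of this theorem: it simply invokes the result from \cite{kami2008gradient}, treating it as the standard Bounded Real Lemma for the closed-loop realization $(\mathbf A_{cl}(\mathbf K),\mathbf B_1,\mathbf C_{cl1}(\mathbf K))$, so there is no internal proof to compare against. Your sketch (form the strictly proper closed loop, Schur-complement the LMI into a Riccati inequality, use the storage function $V=\mathbf x^T\mathbf P\mathbf x$ and the dissipation inequality for sufficiency, and the Hamiltonian/stabilizing-Riccati-solution construction with a strictness perturbation for necessity) is precisely the machinery behind the cited result, and your remark that transposing the realization converts between the controllability- and observability-form certificates is the right way to reconcile the conventions. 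One bookkeeping point: congruence of $\mathbf A_{cl}\mathbf X+\mathbf X\mathbf A_{cl}^T+\mathbf B_1\mathbf B_1^T+\gamma^{-2}\mathbf X\mathbf C_{cl1}^T\mathbf C_{cl1}\mathbf X\prec 0$ by $\mathbf X^{-1}$ yields $\mathbf P\mathbf A_{cl}+\mathbf A_{cl}^T\mathbf P+\mathbf P\mathbf B_1\mathbf B_1^T\mathbf P+\gamma^{-2}\mathbf C_{cl1}^T\mathbf C_{cl1}\prec 0$ with $\mathbf P=\mathbf X^{-1}$, which matches the inequality you wrote only after the rescaling $\mathbf P=\gamma^{2}\mathbf X^{-1}$; the two forms are equivalent certificates for $\Vert T_{z_1w_1}(\mathbf K)\Vert_\infty<\gamma$, so nothing breaks, but the constant must be tracked for the dissipation inequality $\dot V<\gamma^2\Vert\mathbf w_1\Vert^2-\Vert\mathbf z_1\Vert^2$ to come out exactly.
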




The {\bf mixed $H_2/H_\infty$ control problem} can then be stated as: Given an achievable $H_{\infty}$-norm bound $\gamma$, find a feedback controller $\mathbf K \in \mathbb{R}^{m \times p} $ that solves
\begin{eqnarray}
\underset{\mathbf K}{\mathrm{Minimize}~}||T_{z_2w_2}(\mathbf K)||_2, \label{p1}\\
\mbox{s.t}\;\; \mathbf u(t) = -\mathbf K \mathbf y(t)), \mbox{equation (4) holds, and} \label{p2} \\
||T_{z_1w_1}(\mathbf{K}) ||_{\infty} < \gamma\; (\mbox{or equivalently, (6) holds}). \label{p3}
\end{eqnarray}
\noindent $T_{z_2w_2}$ is the closed-loop transfer function from $\mathbf{w}_2$ to $\mathbf{z}_2$. 

For simplicitly, and without loss of generality, we set $\mathbf D_{22}=\mathbf 0$ in (\ref{uncertain:eq}). Following standard robust control results, such as in \cite{kami2008gradient}, it can be shown that the squared $H_2$ norm from $\mathbf w_2$ to $\mathbf z_2$ for the system (\ref{sigma:eq}) is
\noindent
\begin{equation}
||T_{z_2w_2}(\mathbf K)||_2^2 = J(\mathbf{K}):= \mathrm{trace}(\mathbf{B}_2^{T}\mathbf{P}\mathbf{B}_2)
\label{Jk:eq}
\end{equation}
\noindent where $\mathbf{P}$ is the solution of the Lyapunov equation
\begin{equation}
\mathbf{PA}_{cl}(\mathbf K) + \mathbf{A}_{cl}(\mathbf K)^{T}\mathbf{P}+\mathbf{C}_{cl2}(\mathbf K)^T\mathbf{C}_{cl2}(\mathbf K) = 0.
\end{equation}
\noindent 
We can define
\begin{equation}
\label{qr:eq}
 \mathbf{Q}=(\mathbf{C}_2^1)^T\mathbf{C}_2^1 \succeq 0,~ \mathbf{R}=(\mathbf{D}_2^2)^T\mathbf{D}_2^2 \succ 0
 \end{equation}
in which case the objective $J(\mathbf K)$ in (\ref{Jk:eq}) can be written as
\begin{align}
\label{PALM-J-centra:eq}
J(\mathbf K) = \int_{t=0}^{\infty}{\left[\mathbf x^T(t)\mathbf Q\mathbf x(t)+\mathbf u^T(t)\mathbf R\mathbf u(t) \right]dt}.
\end{align}

\subsection{Sparsity-constrained mixed $H_2/H_{\infty}$ control}

The solution $\mathbf K$ in problem \eqref{p1}-\eqref{p3} is usually a dense matrix, meaning that every sensor must send the sensed outputs to every controller. This can result in a large communication cost.
To reduce this cost, we impose a sparsity constraint on the feedback matrix \cite{lian2017game,Lian:2018aa}, resulting in the following sparsity-constrained mixed $H_2/H_{\infty}$ problem: 
\begin{align}
\underset{\mathbf K}{\mbox{min }}& ||T_{z_2w_2}(\mathbf K)||_2,  \nonumber \\ 
 \mbox{ s.t. } & ||T_{z_1w_1}(\mathbf{K}) ||_{\infty} < \gamma, ~ \mathrm{card}(\mathbf K) \leq s,
\label{P}
\end{align}
with the plant model satisfying (4). For simplicity, we define each nonzero entry in $\mathbf{K}$ as one communication link. Alternative definitions of sparsity and their effects on the actual cost of communication are discussed in \cite{lian2017game}.

\subsection{Overview of the centralized PALM algorithm}
\label{palm-centr:sec}

The sparsity-constrained mixed $H_2/H_{\infty}$ problem (\ref{P}) was solved in our recent paper \cite{Lian:2018aa} using the \gls{GraSP} algorithm, assuming that for any given value of $s$ we can find an initial guess for $\mathbf K$ that satisfies the $s$-sparse structure. Depending on the plant model and the uncertainty in (\ref{sigma:eq}), finding such a feasible initial guess in reality, however, can be quite difficult. In this section, we eliminate this requirement by introducing a sparsity-constrained optimization algorithm based on \gls{PALM}.
For this, we first transform \eqref{P} into a problem with two optimization variables, $\mathbf K$ and $\mathbf F$, where $\mathbf K$ is defined in (\ref{p2}),
and $\mathbf F$ represents the sparse feedback matrix that satisfies the cardinality constraint. The problem (\ref{P}) can then be reformulated as follows:
\begin{align}
\underset{\mathbf K, \mathbf F}{\mbox{min }}& J(\mathbf K) + \frac{\rho}{2} || \mathbf K - \mathbf F||_{F}^2,  \nonumber \\ 
 \mbox{ s.t. } & ||T_{z_1w_1}(\mathbf{K}) ||_{\infty} < \gamma, \nonumber \\
&\mathrm{card}(\mathbf F) \leq s,
\label{Pb}
\end{align}
\noindent where the penalty term $\rho/2\vert|\mathbf K - \mathbf F \vert_F^2$ is used to regularize the difference between $\mathbf K$ and $\mathbf F$. When the parameter $\rho$ is chosen large enough, this term can be reduced sufficiently.
There are two constrained variables in (\ref{Pb}). We next transform (\ref{Pb}) to an unconstrained optimization problem by defining the following indicator functions:
\begin{align}
g(\mathbf K) = \left\{\begin{array}{cl}
0, & T_{\infty}(\mathbf K) < \gamma\\
+\infty, & O.W.
\end{array}\right.
\label{palm-gK:eq}
\end{align}
\begin{align}
f(\mathbf F) = \left\{\begin{array}{cl}
0, & \mathrm{card}(\mathbf F) \leq s\\
+\infty, & O.W.
\end{array}\right.
\label{palm-fF:eq}
\end{align}
Using (\ref{palm-gK:eq}) and (\ref{palm-fF:eq}) the problem (\ref{Pb}) can be written as
\begin{align}{}
\underset{\mathbf K, \mathbf F}{\mbox{min }}& \Phi(\mathbf K, \mathbf F),  
\label{Pc}
\end{align}
\noindent where
\begin{equation}
\Phi(\mathbf K, \mathbf F)= J(\mathbf K) + g(\mathbf K) + f(\mathbf F) + H(\mathbf K, \mathbf F)
\label{Phi:eq}
\end{equation}
\noindent with
\begin{equation}
\label{H:eq}
H(\mathbf K, \mathbf F) = \frac{\rho}{2}|| \mathbf K - \mathbf F||_F^2
\end{equation}
\noindent being the coupling function between $\mathbf K$ and $\mathbf F$.

The \gls{PALM} algorithm proceeds by alternating the minimization on the variables $(\mathbf K, \mathbf F)$ through separate subproblems \cite{bolte2014proximal}, which simplifies solving (\ref{Pc}), as described below.
 When $\mathbf K$ is fixed, the optimization (\ref{Pc}) reduces to minimizing the sum of a nonsmooth function $f(\mathbf F)$ and a smooth function $H(\mathbf K, \mathbf F)$ of $\mathbf F$. From the result of proximal forward-backward splitting algorithm \cite{bolte2014proximal}, minimizing $f + H$ can be relaxed as iteratively upper bounding the objective and minimizing the upper bound \cite{parikh2014proximal}. The iteration can be written as
\begin{align}
\label{majorization:eq}
&\mathbf F^{k+1} = \\
&\arg\min_{\mathbf F} \left\{ \langle
 \mathbf F - \mathbf F^k , \nabla_{\mathbf F}H(\mathbf F, \mathbf K) \rangle  + \frac{t}{2}\vert| \mathbf F - \mathbf F^k \vert|_F^2 + f(\mathbf F) \right\},\nonumber
\end{align}
\noindent where $\langle, \rangle$ denotes inner product. Minimizing the first two terms is equivalent to minimizing the first order (linear) approximation of $H(\mathbf K, \mathbf F)$ at $\mathbf F = \mathbf F^k$, regularized by a trust-region penalty near $\mathbf F^k$. When $t \in (L, \infty)$ and $L$ is the Lipschitz constant (see S.II \cite{Lian:aa} and Appendix B.3 \cite{Lian:2019aa}) for $\nabla_{\mathbf F}H(\mathbf K, \mathbf F)$, the regularized linear approximation provides an upper bound on $H(\mathbf K, \mathbf F)$ \cite{parikh2014proximal}.

Eq. (\ref{majorization:eq}) can be rewritten compactly using the definition of a proximal map as
\begin{equation}
\mathbf F^{k+1} \in \mathrm{prox}_t^f \left( \mathbf F^k - 1/t\nabla_{\mathbf F} H(\mathbf K, \mathbf F^k) \right),
\end{equation} 
\noindent where for $\sigma: \mathbb{R}^d \rightarrow (\infty, \infty]$, a proper and lower semicontinuous function, $x\in \mathbb{R}^d$ and $t>0$, the proximal map associated with $\sigma$ at point $\mathbf x$ is
\begin{equation}
\label{proximal-map:eq}
\mathrm{prox}_t^{\sigma}(\mathbf x) = \arg\min_{\mathbf u \in \mathbb{R}^d} \left\{\sigma(\mathbf u) + \frac{t}{2}\vert| \mathbf u - \mathbf x \vert|^2 \right\}.
\end{equation}

Similar analysis can be carried out for the minimization of (\ref{Pc}) when $\mathbf F$ is fixed.
In summary, the \gls{PALM} algorithm minimizes (\ref{Pc}) by alternatively finding the proximal maps:
\begin{eqnarray}
\label{alternate-proximal-F:eq}
\mathbf F^{k+1} &\in& \mathrm{prox}_{a_k}^f \left( \mathbf F^k - 1/a_k\nabla_{\mathbf F} H(\mathbf K^k, \mathbf F^k) \right) \\
\mathbf K^{k+1} &\in& \mathrm{prox}_{b_k}^{J+g} \left( \mathbf K^k - 1/b_k\nabla_{\mathbf K} H(\mathbf K^k, \mathbf F^{k+1}) \right)
\label{alternate-proximal-K:eq}
\end{eqnarray}
where $a_k$ and $b_k$ are positive constants that are greater than the Lipschitz constants $L_1(\mathbf K^k)$ and $L_2(\mathbf F^{k+1})$ of $\nabla_{\mathbf F} H(\mathbf K^k, \mathbf F)$ and $\nabla_{\mathbf K} H(\mathbf K, \mathbf F^{k+1})$, respectively. 

\begin{algorithm}[!b]
     \caption{PALM algorithm for the mixed $H_2/H_{\infty}$ control algorithm with sparsity constraint}
   \label{palm-alg1:alg}
  \begin{algorithmic}
  \State {\bf Given} $s$: sparsity constraint, $\gamma$: $H_{\infty}$-norm bound.
  \State  {1. \bf Initialization}: 
  \State $\mathbf K^0$: any stabilizing feedback gain with $T_{\infty}(\mathbf K^0) < \gamma$.
  \State $\mathbf F^0$: any stabilizing feedback gain $\mathbf F^0$. 
  \State Compute $a:= \gamma_1\rho$, $b := \gamma_2 \rho$.
  \For{$k=1,2,...k_{\mathrm{max}}$ until  $||\mathbf K^{k+1}-\mathbf K^k||_{F} < \epsilon_1$ or $||\mathbf F^{k+1}-\mathbf F^k||_{F} < \epsilon_2$}
  \State // 2. $\mathbf F$-minimization step
  \State 2.1 Compute $\mathbf Z^k := \mathbf F^k - \frac{1}{a}\nabla_{\mathbf F}H(\mathbf K^k, \mathbf F^k)$
  \State 2.2 Prune $\mathbf Z^k$: $\mathbf F^{k+1}:=[\mathbf Z^k]_s$.
  \State // 3. $\mathbf K$-minimization step
  \State 3.1 Compute $\mathbf X^k := \mathbf K^k - \frac{1}{b} \nabla_{\mathbf K} H(\mathbf K^k, \mathbf F^{k+1})$.
  \State 3.2 Update $\mathbf K^{k+1}$: $\mathbf K^{k+1} := \textsc{KproxOp}(\mathbf K^k, \mathbf X^k, b)$.
  \EndFor
   \end{algorithmic}
\end{algorithm}

\subsection{Algorithm description}

We summarize the \gls{PALM} algorithm for sparsity-constrained mixed $H_2/H_{\infty}$ control in Algorithm \ref{palm-alg1:alg}.
In Steps 2 and 3 of Algorithm \ref{palm-alg1:alg}, \textit{$\mathbf F$-minimization} (\ref{alternate-proximal-F:eq}) and \textit{$\mathbf K$-minimization} (\ref{alternate-proximal-K:eq}) are performed, respectively.
In Step 2, we perform iterative \textit{$\mathbf F$-minimization} (\ref{alternate-proximal-F:eq}), which can be rewritten from  (\ref{proximal-map:eq}) as:
\begin{align}
\label{F_opr:eq}
\mathbf F^{k+1} & = \arg\min_{\mathbf F}\left\{ f(\mathbf F) + \frac{a_k}{2}|| \mathbf F - \mathbf Z^k||_F^2 \right\}
\end{align}
\noindent where $\mathbf Z^k$ is the point within the parenthesis in (\ref{alternate-proximal-F:eq}), found in Step 2.2.
\noindent It is easy to see that the partial gradients of $H(\mathbf K, \mathbf F)$, defined in (\ref{H:eq}) are
\begin{align}
\label{gradH:eq}
\nabla_{\mathbf K} H(\mathbf K, \mathbf F) & = \rho(\mathbf K - \mathbf F)\nonumber \\
\nabla_{\mathbf F} H(\mathbf K, \mathbf F) & = \rho(\mathbf F - \mathbf K).
\end{align}
\noindent From (\ref{gradH:eq}), the Lipschitz constant $L_1(\mathbf K^k)=\rho$, and thus the constant $a_k$ in (\ref{alternate-proximal-F:eq}) and (\ref{F_opr:eq}) is defined as
\begin{equation}
a_k = a =\gamma_1\rho
\end{equation}
with $\gamma_1>1$. 



In Step 3 of Algorithm \ref{palm-alg1:alg}, we perform iterative $\mathbf K$-minimization:
\begin{align}
\label{K_opr:eq}
\mathbf K^{k+1} & = \arg\min_{\mathbf K} \left\{ J(\mathbf K) + g(\mathbf K) + \frac{b_k}{2} \vert| \mathbf K - \mathbf X^k \vert|_{F}^2 \right\},
\end{align}
\noindent which is equivalent to (\ref{alternate-proximal-K:eq}), and $b_k$ is chosen as
$b_k = b = \gamma_2\rho$,  with $\gamma_2>1$.



In the following, we present the solutions for Eq. (\ref{F_opr:eq}) and (\ref{K_opr:eq}) used in Steps 2 and 3 of Algorithm \ref{palm-alg1:alg}.

\subsubsection{$\mathbf F$-minimization}
Applying the proximal operator (\ref{alternate-proximal-F:eq}) of function $f$ is equivalent to minimizing a regularized version of $f$. In (\ref{F_opr:eq}), $f$ is an indicator function of the set $\mathcal{X}=\{\mathbf F\vert \mathrm{card}(\mathbf K)\leq s\}$ (\ref{palm-fF:eq}), so the proximal operator in (\ref{F_opr:eq}) (Step 2.3 in Algorithm \ref{palm-alg1:alg}) is equivalent to the projection of $Z^k$ onto the set $\mathcal{X}$. Therefore, we can rewrite (\ref{F_opr:eq}) as
\begin{align}
\label{Fmin:eq}
\mathbf F^{k+1} = &\arg\min_{\mathbf F} ~ \vert| \mathbf F - \mathbf Z^k \vert|_F^2 \nonumber\\
 ~ & \mbox{s.t. }  ~ \mathrm{card}(\mathbf F) \leq s.
\end{align}
As shown in \cite{bolte2014proximal,lin2017co}, the solution to (\ref{Fmin:eq}) is $[\mathbf Z^k]_s$ (see Table \ref{palm-notation:tb}), which is Step 2.3 of Algorithm \ref{palm-alg1:alg}.

\begin{algorithm}[!b]
     \caption{\textsc{KproxOp}: Subroutine to solve (\ref{Kmin:eq})}
   \label{palm-Kmin:alg}
  \begin{algorithmic}[1]
    \Procedure{KproxOp}{$\mathbf K^{\mathrm{cur}}$, $\mathbf X^k$, $b$}
    \While{True}
    \State {$\mathbf K^{\mathrm{prev}} := \mathbf K^{\mathrm{cur}}$ }
    \If {$\vert|\nabla_{\mathbf K} h(\mathbf K^{\mathrm{cur}})\vert|_F < \epsilon_3$}
    \State{// Stationary point in the interior of the $H_{\infty}$-constraint set.}
    \State{break}
    \EndIf
   \State {// Take a gradient-descent step in the interior of the $H_{\infty}$-constraint set.}
   \State {$\mathbf{K}^{\mathrm{cur}} := \mathbf{K}^{\mathrm{prev}} - d\nabla_{\mathbf K} h(\mathbf K^{\mathrm{prev}})$, where step size $d>0$ is chosen by backtracking line search \cite{luenberger1984linear} s.t. $T_{\infty}(\mathbf{K}^{\mathrm{cur}}) < \gamma$}
  \If {$d < \epsilon_2$ }
  \State // $\mathbf K^{\mathrm{prev}}$ is near the boundary of the $H_{\infty}$-constraint set
  \State{Solve for $\mathbf K^{\mathrm{in}}$ using (\ref{fd:eq}). Let $\Delta \mathbf K^{\mathrm{cur}} :=\mathbf K^{\mathrm{in}}-\mathbf K^{\mathrm{prev}}$.}
  \State{$\mathbf{K}^{\mathrm{cur}} := \mathbf{K}^{\mathrm{prev}}+  d' \Delta \mathbf{K}^{\mathrm{cur}}$, where $d'$ is determined by backtracking line search \cite{luenberger1984linear}.}
  \EndIf
  \If {$\vert| \mathbf K^{\mathrm{cur}} - \mathbf K^{\mathrm{prev}}\vert|_F < \epsilon_1$}
  \State break
  \EndIf
  \EndWhile
  \EndProcedure
   \end{algorithmic}
\end{algorithm}

\subsubsection{$\mathbf K$-minimization}

Next we focus on the proximal operator for (\ref{K_opr:eq}), which is equivalent to
\begin{align}
\label{Kmin:eq}
\min_{\mathbf K} ~ & h(\mathbf K)  \nonumber\\
\mbox{s.t. }~ & T_{\infty}(\mathbf K) < \gamma
\end{align}
\noindent where 
\begin{equation}
h(\mathbf K) \triangleq \left(J(\mathbf K) + \frac{b_k}{2} \vert| \mathbf K - \mathbf X^k \vert|_F^2 \right).
\end{equation}

\noindent We propose to solve (\ref{Kmin:eq}) using a feasible direction method in the search space of $\mathbf K$, summarized in Algorithm \ref{palm-Kmin:alg}. Starting from an interior point of the feasible region of the problem (In step 3.2 of Algorithm \ref{palm-alg1:alg}, $\mathbf K^k$ always satisfies $T_{\infty}(\mathbf K^k) < \gamma$), the algorithm first descends along the gradient of $h(\mathbf K)$ until the solution reaches a stationary point in the interior (line 6) or on the boundary of the constraint set. When the solution is in the interior of the feasible region, a gradient-descent update step is used (line 9). When the current solution is at the boundary and the gradient-descent direction violates the $H_{\infty}$-norm constraint, we seek a direction that reduces the minimization objective and simultaneously moves the solution away from the boundary of the $H_{\infty}$-norm constraint set to its interior (lines 12-13 in Algorithm \ref{palm-Kmin:alg}).

In lines 12-13, we find the improving feasible direction for (\ref{Kmin:eq}) when the solution is at the boundary of the feasible region. We recall the Zoutendijk's method \cite{bazaraa2013nonlinear} as the foundation for general feasible direction methods, which requires evaluating the gradients of both the objective and the constraint functions, 
i.e., $\nabla_{\mathbf K}h(\mathbf K)$ and $\nabla_{\mathbf K}T_{\infty}(\mathbf K)$.
 In the original formulation of Zoutendijk's method (S.I \cite{Lian:aa}, Appendix B.2 in \cite{Lian:2019aa}), the gradient of the constraint function is evaluated to form conditions for the improving feasible direction. However, due to the difficulty in evaluating the gradient of an $H_{\infty}$ norm, we utilize the concept of level sets as in \cite{saeki2006static}, as well as their LMI interpretation, to develop an alternative condition.

In each step of the Zoutendijk's method, a linear programming subproblem is solved to find the improving feasible direction. The inequality $\mathrm{trace}[(\nabla_{\mathbf K} h(\mathbf K))^T\cdot \Delta \mathbf K]<0$ guarantees that an update direction $\Delta \mathbf K$ decreases $h(\mathbf K)$ in (\ref{Kmin:eq}). Moreover, the inequality which involves gradient of the $H_{\infty}$ norm, i.e.,  $\mathrm{trace}[(\nabla_{\mathbf K} ||T_{z_1w_1}(\mathbf K)||_{\infty})^T\cdot \Delta \mathbf K] < 0$ can be used to check if $\Delta \mathbf K$ moves away from the $H_{\infty}$ bound. 
In the gain space of $\mathbf K\in \mathbb{R}^{m \times n}$, the set of all stabilizing $\mathbf K$ which satisfy (\ref{LMI_inf:eq}), i.e., with $H_{\infty}$ norm smaller than $\gamma$, is a level set
\begin{equation}
\label{palm-kappa:eq}
\mathcal{K}(\gamma):=\{ \mathbf K | ~||T_{z_1w_1}(\mathbf K)||_{\infty} <\gamma \}.
\end{equation}
Given a stabilizing gain $\mathbf K^0$, the algorithm in \cite{saeki2006static} proceeds by first finding a sufficiently small $\gamma_0$ such that $\mathbf K^0\in \mathcal{K}(\gamma_0)$. Next, a convex subset $\hat{\mathcal{K}}(\gamma_0)$ of $\mathcal{K}(\gamma_0)$, which also contains $\mathbf K^0$ near the boundary, can be formed using an \gls{LMI} sufficient condition. Then an inner point $\mathbf K^{\mathrm{in}}$ of $\hat{\mathcal{K}}(\gamma_0)$ is found using the following sufficient \gls{LMI} condition.

For the above ($\mathbf K^0$, $\gamma_0$),  $\mathbf K^{\mathrm{in}}$ is an inner point of $\hat{\mathcal{K}}(\gamma_0)$ if the matrix function $\mathbf G(\mathbf K^{\mathrm{in}}; \mathbf K^0)$ is positive definite, i.e.,
\begin{equation}
\mathbf G(\mathbf K^{\mathrm{in}}; \mathbf K^0) \succ 0 \Rightarrow \mathbf K^{\mathrm{in}} \in \hat{\mathcal{K}}(\gamma_0).
\label{LMIsuff:eq}
\end{equation}
The details of computing $\mathbf G(\mathbf K^{\mathrm{in}}; \mathbf K^0)$ are provided in \cite{Lian:2018aa}.

We combine the LMI condition (\ref{LMIsuff:eq}) and the gradient of $h(\mathbf K)$ in (\ref{Kmin:eq}) to form the iterative algorithm for solving (\ref{Kmin:eq}). The gradient of $h(\mathbf K)$ is
\begin{equation}
\nabla_{\mathbf K} h(\mathbf K) = 2(\mathbf R \mathbf K \mathbf C - \mathbf B^T \mathbf P)\mathbf L\mathbf C^T + b_k(\mathbf K - \mathbf X^k).
\end{equation}
Thus, given a current solution $\mathbf K^{\mathrm{cur}}$ near the boundary of the $H_{\infty}$-norm constraint set, an improving feasible point $\mathbf K^{\mathrm{in}}$ can be found by solving the following linear matrix inequality:
\begin{eqnarray}
\underset{z, \mathbf K^{\mathrm{in}}}{\mbox{max}} & & z \nonumber\\
\mbox{s.t.}& & \mathrm{trace}[(\nabla_{\mathbf K}h(\mathbf K^{\mathrm{cur}}))^T(\mathbf K^{\mathrm{in}}-\mathbf K^{\mathrm{cur}})] + z \leq 0 \nonumber\\
& &\mathbf G(\mathbf K^{\mathrm{in}}; \mathbf K^{\mathrm{cur}}) -\theta z\cdot\mathbf I \succeq 0.
\label{fd:eq}
\end{eqnarray} 
\noindent The parameter $\theta \geq 0$ is a predetermined factor that controls how far $\mathbf K$ moves away from the $H_{\infty}$-norm boundary. The value of $\theta$ determines the speed of reduction of the $H_{\infty}$ norm, with a small value of $\theta$ resulting in a less aggressive shrinkage of the $H_{\infty}$ norm. If the solution $z^*$ in (\ref{fd:eq}) is positive, then $\mathbf K^{in} - \mathbf K^{\mathrm{cur}}$ is an improving feasible direction; otherwise, an improving feasible direction cannot be found.

Given the current solution $\mathbf K^{\mathrm{cur}}$ and the inner point $\mathbf K^{\mathrm{in}}$ solved in (\ref{fd:eq}), the update rule is given in lines 12--13 of Algorithm \ref{palm-Kmin:alg},
\noindent where $d' \leq 1$ is the step size found by a backtracking line search using the Armijo condition \cite{bazaraa2013nonlinear}. 


\section{Sparsity-constrained noncooperative games for multi-agent control}

\label{palm-game:sec}

\subsection{Multi-agent model and generalized Nash equilibrium}

Next, we extend the optimization in Section \ref{palm-centra:sec} to the case when the agents have different optimization objectives. To accommodate this scenario, we consider the following multi-agent system with model uncertainty in $\mathbf A$ and $\mathbf B$ matrices. Consider a network of $N$ agents, where agent $i$ employs its control strategy $\mathbf u_i(t) \in \mathbb{R}^{q_i \times 1} $, $i=1,...,N$. Thus, (\ref{uncertain:eq}) becomes
\begin{align}
\label{multi-uncertain:eq}
&\dot{\mathbf x}(t) = (\mathbf A {+} \Delta \mathbf A) {\mathbf x}(t) + \sum_{i=1}^N{\left(\mathbf B_{(i)} {+} \Delta \mathbf B_{(i)}\right) \mathbf u_i(t)} + \mathbf B_2 \mathbf w_2(t) \nonumber\\
&\mathbf y(t) = \mathbf{Cx}(t) \nonumber\\
&\mathbf u_i(t) = - \mathbf K_i\mathbf y(t),~ i = 1,...,N,
\end{align}
\noindent where $\mathbf A \in \mathbb{R}^{n \times n}$, $\mathbf{B}_{(i)}\in \mathbb{R}^{n \times q_i}$ represent the nominal values of the state and control matrix, respectively, for the $i$-th control input. We assume all agents know $\mathbf A$ and $\mathbf B_{(i)}$ for $i=1,...,N$, and the uncertain matrices $\Delta \mathbf A\in \mathbb{R}^{n\times n}$ and $\Delta \mathbf B \triangleq [\Delta \mathbf B_{(1)}, \Delta \mathbf B_{(2)},...,\Delta \mathbf B_{(N)}]$ satisfy the norm-bounded assumption (\ref{deltaA:eq}), where $\Delta \mathbf{B}_{(i)}\in \mathbb{R}^{q_i\times n}$. 
Note that $\mathbf B_{(i)}$ is the column block of $\mathbf B$ in (\ref{uncertain:eq}), with $\sum_{i=1}^{N}{\mathbf B_{(i)}\mathbf u_i = \mathbf B\mathbf u}$, and $\mathbf K_i \in \mathbb{R}^{q_i \times p}$ is the row block of $\mathbf K$ in (\ref{p2}) associated with the rows corresponding to the control inputs for agent $i$. Thus, the multi-agent system can be expressed in the form (\ref{sigma:eq}--\ref{sigmak:eq}), with the first equation in (\ref{sigma:eq}) replaced by
\begin{equation}
\dot{\mathbf x}(t) = \mathbf{Ax}(t) + \sum_{i=1}^{N}{\mathbf B_{(i)}\mathbf u_i(t)} + \mathbf B_1\mathbf w_1(t) + \mathbf B_2\mathbf w_2(t). 
\label{multi-sigma:eq}
\end{equation}
\noindent We introduce the following notation. 
Let $\mathbf K_{-i}$ denote the set of strategies $j\neq i, j=1,...,N$. When agent $i$ chooses its strategy $\mathbf K_i$ in (\ref{multi-uncertain:eq}) given $\mathbf K_{-i}$, we refer to the resulting feedback gain matrix $\mathbf{K}$ as $\{\mathbf K_i; \mathbf K_{-i} \}$.

In the multi-agent system (\ref{multi-sigma:eq}), the single performance output $\mathbf z_2$ in (\ref{uncertain:eq}) is replaced by $N$ individual performance outputs of the agents $\mathbf z_{2, (i)}$. Assuming that each performance output $\mathbf z_{2,(i)} = \mathbf C_{2, (i)}\mathbf{x} + \mathbf{D}_{2,(i)}\mathbf u_i$ has a form that satisfies (\ref{palm-h2:eq}), the $H_2$-cost from $\mathbf w_2$ to agent $i$'s performance output can equivalently be defined as the individual LQR cost of agent $i$: 
\begin{eqnarray}
J_i(\mathbf K) &=& \int_{t=0}^{\infty}{\left[\mathbf x^T(t)\mathbf Q_i\mathbf x(t) + \mathbf u_i^T(t)\mathbf R_i\mathbf u_i(t)\right]dt} \nonumber\\
\mbox{s.t.} & & \mathbf w_1(t) = \mathbf 0, \mathbf w_2(t) = \boldsymbol \delta(t)
\label{Ji:eq}
\end{eqnarray}
\noindent where $\mathbf Q_i \in \mathbb{R}^{n\times n} \succeq 0$ and $\mathbf R_i \in \mathbb{R}^{q_i \times q_i} \succ 0$ are weight matrices for state and control input of agent $i$, respectively, and $\mathbf w_2(t)$ is an impulse disturbance. Similar to the centralized case, for stabilization of (\ref{multi-uncertain:eq}), the joint control strategy $\mathbf K$ needs to satisfy (\ref{LMI_inf:eq}).

In addition, we are interested in implementing a sparse controller subject to a global sparsity constraint.
In the following, we develop a noncooperative game where agent $i$ is modeled as a game player, with its strategy given by the control policy represented by $\mathbf K_i$. The joint strategies $\{\mathbf K_1, \mathbf K_2, ..., \mathbf K_N\}$ must guarantee stability of the uncertain system (\ref{multi-uncertain:eq}) with at most $s$ communication links in total. {Thus}, the set of admissible strategies $\{\mathbf K_1, \mathbf K_2,..., \mathbf K_N\}$ must satisfy \begin{eqnarray}
\vert| T_{z_1w_1}(\{\mathbf K_1, \mathbf K_2,...,\mathbf K_N \})\vert|_{\infty} \allowbreak < \gamma,\\ \mathrm{card}(\{\mathbf K_1, \mathbf K_2,...,\mathbf K_N \}) \leq s, 
\end{eqnarray}
and the set of feasible strategies for player $i$, given other players' strategies $\mathbf K_{-i}$, must satisfy
\begin{align}
\label{ith-constraint:eq}
\mathcal{G}_i(\mathbf K_{-i}) = \{ \mathbf K_i \vert  \mathrm{card}(\{\mathbf K_i; \mathbf K_{-i} \}) \leq s, \nonumber\\
\vert| T_{z_1w_1}(\{\mathbf K_i; \mathbf K_{-i} \})\vert|_{\infty} < \gamma.
\end{align}
\noindent Given $\mathbf K_{-i}$, player $i$ solves the following optimization:
\begin{align}
\label{ith-game:eq}
&\min_{\mathbf K_i} J_i(\{\mathbf K_i; \mathbf K_{-i}\}) \nonumber\\
&\mathrm{s.t. } \; \mathbf K_i \in \mathcal{G}_i(\mathbf K_{-i}).
\end{align}

\noindent Following \cite{paccagnan2016distributed}, we can say that the set of strategies $(\mathbf K_1^*, \mathbf K_2^*,...,\mathbf K_N^*)$ is a Generalized Nash Equilibrium (GNE) if
\begin{align}
\label{GNE:eq}
J_i(\{\mathbf K^*_i; \mathbf K^*_{-i}\}) \leq J_i(\{\mathbf K_i; \mathbf K^*_{-i}\}),  ~\forall \mathbf K_i \in \mathcal{G}_i(\mathbf K^*_{-i}), \nonumber\\
i = 1,...,N.
\end{align}
In \gls{GNE}, no user can unitarily deviate from the equilibrium to improve his utility given that the strategy satisfies the global constraint \cite{paccagnan2016distributed}. A \gls{GNE} differs from \gls{NE} due to the presence of global constraints.

\subsection{PALM algorithm for GNE}
We propose to solve the generalized Nash strategies (\ref{GNE:eq}) using the best-response dynamic (\ref{ith-game:eq}) where each player takes its turn to maximize its payoff based on other players' strategies. The steps are listed in Algorithm \ref{palm-game:alg}. Recall Algorithm \ref{palm-alg1:alg}, where the tuple $\mathbf K, \mathbf F$ was iteratively optimized to solve the penalized optimization (\ref{Pc}). Similarly, given $\mathbf K_{-i}$, player $i$'s optimization (\ref{ith-game:eq}) can be written in the penalized form using indicator functions as
\begin{equation}
\label{Phi_i:eq}
\min_{\mathbf K_i, \mathbf F} \Phi_i(\mathbf K_i, \mathbf F; \mathbf K_{-i})
\end{equation}
\noindent with
\begin{eqnarray}
\Phi_i(\mathbf K_i, \mathbf F; \mathbf K_{-i}) \triangleq &J_i(\{\mathbf K_i; \mathbf K_{-i}\}) + h(\{ \mathbf K_i; \mathbf K_{-i}\}) \nonumber\\
 &+ f(\mathbf F) + H(\{\mathbf K_i; \mathbf K_{-i}\}, \mathbf F),
\end{eqnarray}
\noindent where the indicator functions $h(\cdot)$ and $f(\cdot)$ are given by (\ref{palm-gK:eq},\ref{palm-fF:eq}), and the matrix $\{ \mathbf K_i; \mathbf K_{-i}\}$ is defined after (\ref{multi-sigma:eq}) . In this optimization, $\mathbf K_i \in \mathbb{R}^{q_i \times n}$ is viewed as the feedback gain of agent $i$ that satisfies $\vert| T_{z_1w_1}(\{\mathbf K_i; \mathbf K_{-i} \})\vert|_{\infty} < \gamma$, and $\mathbf F \in \mathrm{R}^{m\times n}$ represents the system-wide sparse feedback gain matrix that satisfies the global sparsity constraint.
In the function $\Phi(\mathbf K, \mathbf F)$ (\ref{Pc}), the variables $\mathbf K$ and $\mathbf F$ were of the same size, and they represented the same global sparsity-constrained feedback gain. However, when minimizing $\Phi_i(\mathbf K_i, \mathbf F; \mathbf K_{-i})$ (\ref{Phi_i:eq}), the variable $\mathbf K_i$ is the robust feedback gain for player $i$ while $\mathbf F$ represents the global feedback gain that satisfies the sparsity constraint. In the best-response dynamic, in each round the players take turns to minimize their own respective $\Phi_i$ functions over $\mathbf K_i$ and $\mathbf F$. The equilibrium point is achieved when no player can improve its $\Phi_i$ using $\mathbf K_i$ and $\mathbf F$ while $\mathbf K_j$ is fixed for $j\neq i$. 
Note that in the initial best response update steps, given non-sparse $\mathbf K_{-i}$, the minimization objective (\ref{Phi_i:eq}) cannot drive the coupling function $H(\{\mathbf K_i; \mathbf K_{-i}\}, \mathbf F)$ in (\ref{H:eq}) to zero. This is because $\sum_{i\neq j}{\vert| \mathbf K_i - (\mathbf F)_i \vert|^2_F}\approx 0$ only when $\mathbf K_{-i}$ approaches the desired level of sparsity, where $(\mathbf F)_i\in \mathbb{R}^{q_i \times n}$ denotes the row block of $\mathbf{F}$ that corresponds to the feedback gain of the $i^{th}$ player. 

\begin{algorithm}[!b]
     \caption{PALM algorithm for computing GNE (\ref{GNE:eq})}
   \label{palm-game:alg}
  \begin{algorithmic}[1]
  \State {\bf Given} $s$: global sparsity constraint, $\gamma$: $H_{\infty}$-norm bound.
  \State  {\bf Initialization}: 
  \State $\mathbf K^0$: any stabilizing feedback gain with $T_{\infty}(\mathbf K^0) < \gamma$.
  \State $\mathbf F^0$: any stabilizing feedback gain $\mathbf F^0$. 
  \For{$l = 1... l_{\max}$ until $\vert| \mathbf F^{1}-\mathbf F^{l-1}\vert|_F < \epsilon_3$}
  \State $\mathbf K^{l}:=\mathbf K^{l-1}$, $\mathbf F^{l}:=\mathbf F^{l-1}$
  \For{$i=1...N$}
  \State // Solve using (\ref{Phi_i-Z:eq}--\ref{Phi_i-Kmin:eq}) with $\mathbf K_i^{l}$, $\mathbf F^{l}$ as the initial values:
  \State $
  \hat{\mathbf K}_i, \hat{\mathbf F} = \arg\min_{ \mathbf K_i, \mathbf F} \Phi_i(\mathbf K_i, \mathbf F; \mathbf K^{l}_{-i})$
  \State // Update $\mathbf K^l$ and $\mathbf F^l$:
  \State $\mathbf K^{l} = \{ \hat{\mathbf K}_i; \mathbf K^{l}_{-i}\}$
  \State $\mathbf F^{l} = \hat{\mathbf F}$
  \EndFor
  \EndFor
  \State {\bf Output: $\mathbf K^{\mathrm{GNE}}(s):= \mathbf F^l$.}
   \end{algorithmic}
\end{algorithm}

The minimization of (\ref{Phi_i:eq}) is similar to the minimization of (\ref{Pc}). Thus, modified Algorithm \ref{palm-alg1:alg} is used in line 9 of Algorithm \ref{palm-game:alg} to solve (\ref{Phi_i:eq}). Given its $\mathbf K_i^l$, $\mathbf F^l$ at iteration $l$, the following proximal operators are performed by player $i$ in the minimization of line 9 of Algorithm \ref{palm-game:alg}.

\subsubsection{$\mathbf F$-minimization}:

\noindent Compute the proximal point $\mathbf Z^k$ for $\mathbf F^k$:
\begin{align}
\label{Phi_i-Z:eq}
\mathbf Z^k &= \mathbf F^k - \frac{1}{a} \nabla_{\mathbf F} H(\mathbf K^k, \mathbf F^k) \nonumber\\
&= \mathbf F^k - \frac{\gamma}{a}(\mathbf F^k - \{\mathbf K_i^k; \mathbf K_{-i}\})
\end{align}
\noindent Solve the proximal operator:
\begin{eqnarray}
&\mathbf F^{k+1} = &\arg\min_{\mathbf F} \frac{a}{2}\vert| \mathbf F - \mathbf Z^k \vert|_F^2 \nonumber\\
&~&\mathrm{s.t. }~ \mathrm{card}(\mathbf F) \leq s,
\end{eqnarray}
\noindent and get $\mathbf F^{k+1} = [\mathbf Z^k]_s$, similar to Step 2 of Algorithm \ref{palm-alg1:alg}.

\subsubsection{$\mathbf K$-minimization}:

\noindent Compute proximal point $\mathbf X_i^k$ for $\mathbf K_i$:
\begin{eqnarray}
\mathbf X_i^k &= &\mathbf K_i^k - \frac{1}{b} \nabla_{\mathbf K_i} H(\mathbf K_i^k - (\mathbf F^{k+1})_i) \nonumber\\
&=& \mathbf K_i^k - \frac{\rho}{b}(\mathbf K_i^k - (\mathbf F^{k+1})_i).
\end{eqnarray} 
\noindent Solve the proximal operator:
\begin{eqnarray}
\label{Phi_i-Kmin:eq}
\mathbf K_i^{k+1} &=& \arg\min_{\mathbf K_i} \left\{ J_i(\{\mathbf K_i; \mathbf K_{-i}\})+ \frac{b}{2}\vert| \mathbf K_i - \mathbf X_i^k \vert|_F^2 \right\}\nonumber\\
 & & \mathrm{s.t.}~\vert| T_{z_1w_1}(\{ \mathbf K_i; \mathbf K_{-i}\}) \vert|_{\infty} < \gamma.
\end{eqnarray}

Solving (\ref{Phi_i-Kmin:eq}) is similar to solving (\ref{Kmin:eq}). In (\ref{Phi_i-Kmin:eq}), player $i$ aims to update its control strategy $\mathbf K_i$ given other players' strategies $\mathbf K_{-i}$. Algorithm \ref{palm-Kmin:alg} is applied to solve (\ref{Phi_i-Kmin:eq}) with several modifications. The minimization cost in (\ref{Phi_i-Kmin:eq}) is defined as $h_i(\mathbf K_i, \mathbf K_{-i})\triangleq J_i(\{\mathbf K_i; \mathbf K_{-i}\})+ \frac{b}{2}\vert| \mathbf K_i - \mathbf X_i^k \vert|_F^2$, and the gradient with respect to $\mathbf K_i$ in line 4 of Algorithm \ref{palm-Kmin:alg} is replaced by 
\begin{align}
&\nabla_{\mathbf K_i}h_i(\mathbf K_i, \mathbf K_{-i}) = \\
&2(\mathbf R_i\mathbf K_i\mathbf C - \mathbf B_{(i)}^T\mathbf P(\mathbf K_i, \mathbf K_{-i}))\mathbf L(\mathbf K_i, \mathbf K_{-i})\mathbf C^T + b(\mathbf K_i {-} \mathbf X^k_{i}) \nonumber
\end{align}
\noindent where $L(\mathbf K_i, \mathbf K_{-i})$ and $P(\mathbf K_i, \mathbf K_{-i})$ are the solution of the following set of equations:
\begin{align}
\label{palm-lyapPi:eq}
&(\bar{\mathbf A}_{cl}(\mathbf K_i, \mathbf K_{-i}))^{\mathrm{T}}\boldsymbol P(\mathbf K_i, \mathbf K_{-i}) + \boldsymbol P(\mathbf K_i, \mathbf K_{-i})\bar{\mathbf A}_{cl}(\mathbf K_i, \mathbf K_{-i})
 \nonumber\\
& + \bar{\mathbf Q}_i(\mathbf K_i, \mathbf K_{-i})= 0 \nonumber\\
&\bar{\mathbf A}_{cl}(\mathbf K_i, \mathbf K_{-i})\boldsymbol L(\mathbf K_i, \mathbf K_{-i}) + \boldsymbol L(\mathbf K_i, \mathbf K_{-i})(\bar{\mathbf A}_{cl}(\mathbf K_i, \mathbf K_{-i}))^{T} \nonumber\\
&+ \boldsymbol B_2 \boldsymbol B_2^{\mathrm{T}} = 0 
\end{align}
\noindent and
\begin{align}
&\bar{\mathbf A}_{cl}(\mathbf K_i, \mathbf K_{-i}) = \mathbf A - \sum_{j\neq i}{\mathbf B_{(j)}}\mathbf K_j \mathbf C - \boldsymbol B_{(i)}\boldsymbol K_i\mathbf C\\
&\bar{\mathbf Q}_i(\mathbf K_i, \mathbf K_{-i}) = \mathbf Q_i + \mathbf C^T(\sum_{j\neq i}{(\mathbf{K}_j)^{T}\mathbf R_j \mathbf K_j} +\mathbf{K}_i^{T}\boldsymbol R_i \boldsymbol K_i)\mathbf C.\nonumber
\end{align}
\noindent $\mathbf Q_i$ and $\mathbf R_i$ are defined in (\ref{Ji:eq}).
\noindent Similar to lines 10--14 in Algorithm \ref{palm-Kmin:alg}, when player $i$'s strategy $\mathbf K_i^{\mathrm{cur}}$ is near the boundary of the $H_{\infty}$-norm constraint given other players' strategies $\mathbf K_{-i}$, an improving feasible direction for $\mathbf K_i$ can be found by solving an \gls{LMI} such as (\ref{fd:eq}) for scalar $z$ and $\mathbf K_i^{in}\in \mathbb{R}^{q_i\times n}$:
\begin{align}
&\underset{z, \mathbf K_i^{\mathrm{in}}}{\mbox{Maximize}} ~  z \nonumber\\
&\mbox{s.t.}  ~\mathrm{trace}[(\nabla_{\mathbf K_i}h_i(\mathbf K_i^{\mathrm{cur}}, \mathbf K_{-i})^T(\mathbf K_i^{\mathrm{in}}-\mathbf K_i^{\mathrm{cur}})] + z \leq 0 \nonumber\\
& \qquad {\mbox{    }} \mathbf G\left(
\{\mathbf K_i^{\mathrm{in}}; \mathbf K_{-i}\}
; 
\{\mathbf K_i^{\mathrm{cur}}; \mathbf K_{-i}\}
\right) -\theta z\cdot\mathbf I \succeq 0.
\label{fd_oneplayer:eq}
\end{align}

\noindent Here, $\theta$ is the factor to control the speed of reduction of $H_{\infty}$ norm, and $G$ is defined as in (\ref{LMIsuff:eq}). Then the update direction of $\mathbf K_i$ can be formed as $\Delta \mathbf K_i = \mathbf K_i^{\mathrm{in}} - \mathbf K_i^{\mathrm{cur}}$. We note that in $\mathbf K$-minimization step, each player updates its own strategy $\mathbf K_i$, while in $\mathbf F$-minimization step, the strategies of all the players are jointly updated. Thus, Algorithm \ref{palm-game:alg} has partially distributed computation.

Finally, we note that the centralized problem (\ref{P}) can be represented as a potential game by modifying the noncooperative game (\ref{GNE:eq}). A game $\{N, \{\mathcal{A}_i\}, \{J_i\} \}$ with $N$ players, action set $\{\mathcal A_i\}_{i=1}^N$ and utilities $\{J_i\}_{i=1}^N$, is an exact potential game \cite{li2013designing} if there exists a global function $\Phi$, such that for every player $i\in N$, $a_{-i} \in \mathcal A_{-i}$ and $a_i', a_i^{''}\in \mathcal A_i$, 
\begin{equation}
\label{palm-potential:eq}
J_i(a_i', a_{-i}) - J_i(a_i^{''}, a_{-i}) = \Phi(a_i', a_{-i}) - \Phi(a_i^{''}, a_{-i}).
\end{equation}
We employ a common assumption that the input penalty of each user is uncorrelated, i.e., $\mathbf u^T \mathbf R \mathbf u = \sum_{i=1}^N{\mathbf u_i^T \mathrm R_{(i)} \mathbf u_i}$, where $\mathbf R_{(i)}$ is the submatrix of $\mathbf R$ that represents the weight matrix for $\mathbf u_i(t)$.  Thus, the objective $J(\mathbf K)$ in (\ref{PALM-J-centra:eq}) can be expressed as
\begin{align}
\label{palm-J-potential:eq}
&J(\mathbf K)=J(\{\mathbf K_i, \mathbf K_{-i}\}) = \\
&\int_{0}^{\infty}{\left[\mathbf x^T\left(\mathbf Q + \mathbf C^T(\sum_{j\neq i}{\mathbf K_j^T \mathbf R_{(j)} \mathbf K_j})\mathbf C\right)\mathbf x {+} \mathbf u_i^T \mathbf R_{(i)} \mathbf u_i \right]dt}, \nonumber
\end{align}
\noindent
with $\mathbf u_i= -\mathbf K_i\mathbf y$. Thus, the minimization objectives $J_i(\{\mathbf K_i; \mathbf K_{-i}\})$ of all players in (\ref{ith-game:eq}) are replaced by the global LQR cost (\ref{palm-J-potential:eq}).
To convert the game in (\ref{ith-game:eq}) into an exact potential game, we set $\mathbf Q_i = \mathbf Q + \mathbf C^T(\sum_{j\neq i}{\mathbf K_j^T \mathbf R_{(j)} \mathbf K_j})\mathbf C$, $\mathbf R_i = \mathbf R_{(i)}$ in the individual cost (\ref{Ji:eq}), which is consistent with (\ref{palm-J-potential:eq}). The \gls{GNE} strategies $\mathbf K^*_1, \mathbf K^*_2, ..., \mathbf K^*_N$ for the potential game can be written as
\begin{align}
&J(\{\mathbf K^*_i; \mathbf K^*_{-i}\}) \leq J(\{\mathbf K_i; \mathbf K^*_{-i}\}),  \nonumber\\
&~\forall \mathbf K_i \in \mathcal{G}_i(\mathbf K^*_{-i}), \;i = 1,...,N.
\label{palm-pg:eq}
\end{align} 
\noindent We employ Algorithm \ref{palm-game:alg} to compute (\ref{palm-pg:eq}). Players update their control strategies in the $\mathbf K$-minimization step distributively, and jointly update their strategies in the $\mathbf F$-minimization step, thereby obtaining a partially distributed implementation of the centralized sparsity-constrained problem (\ref{P}).

\begin{figure}[t]
\centering
\includegraphics[width=\columnwidth]{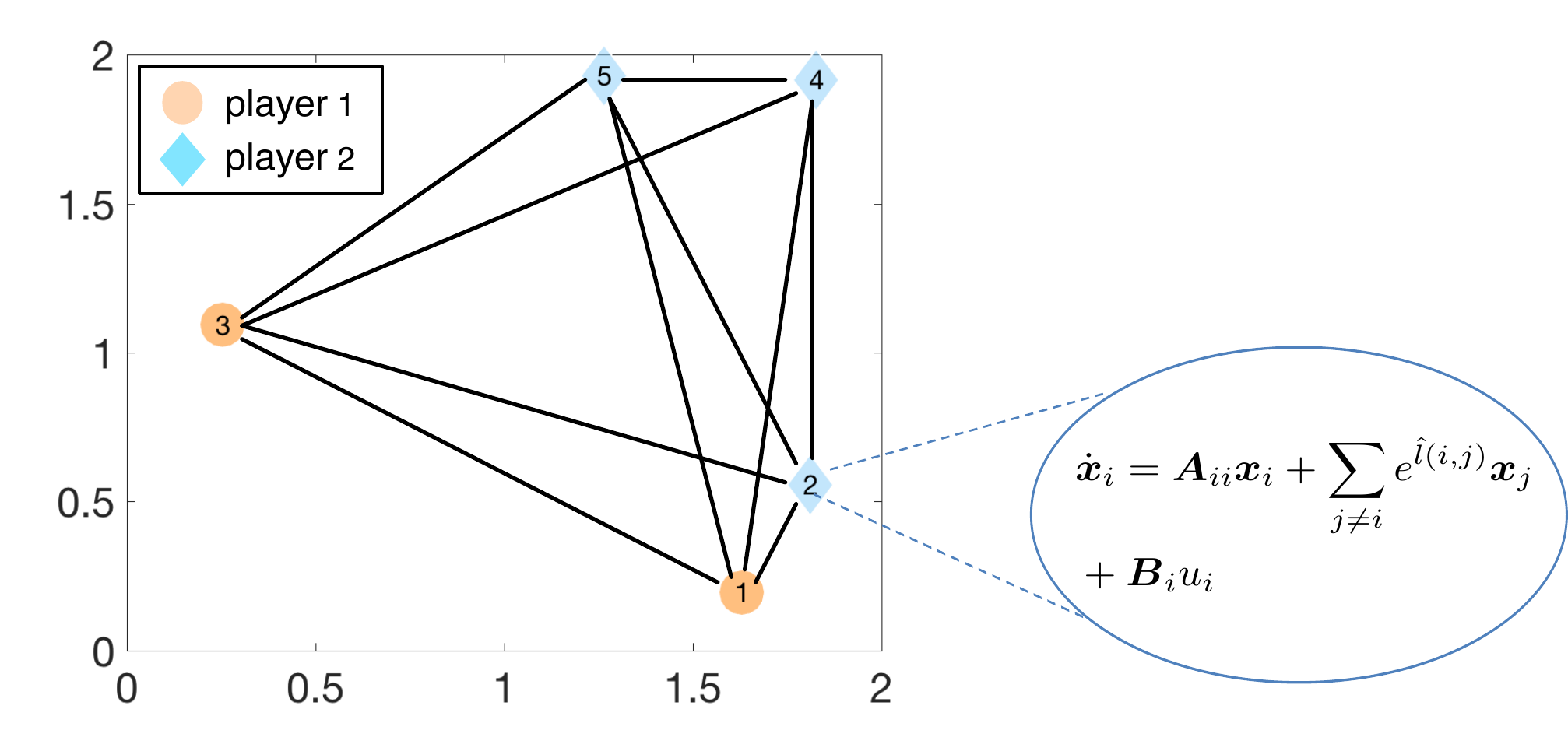}
\caption{The 5-node open-loop unstable network}
\label{palm-5nodesys:fig}
\end{figure}

\section{Numerical results and convergence analysis}
\label{palm-num:sec}

\subsection{Network model}

We consider an example of an uncertain network model from \cite{motee2008optimal}. The network consists of $N$ connected nodes distributed randomly on a $L$ unit by $L$ unit square area. Each node is an unstable second-order system coupled with other nodes through an exponentially decaying function of the Euclidean distance $\hat l(i,j)$ \cite{mihailo, motee2008optimal}. The state-space representation of node $i$ is given as:
\begin{align}
\begin{bmatrix}\dot{x}_{1i} \\ \dot{x}_{2i}\end{bmatrix} {=} \hat{
\mathbf A}_{ii} \begin{bmatrix} {x}_{1i} \\ {x}_{2i}\end{bmatrix} 
{+} \sum_{j\neq i}{e^{-\hat l(i,j)}} \begin{bmatrix}{x}_{1j} \\ {x}_{2j}\end{bmatrix} 
{+} \begin{bmatrix}0\\1\end{bmatrix}(d_i+u_i).
\label{unst:eq}
\end{align}
\noindent In the above state-space representation, the state matrix $\hat{\mathbf A}_{ii}, ~i=1,...,N$ and the Euclidean distances $\hat l(i,j)$ are not known exactly. In particular, 
\begin{eqnarray}
\hat{\mathbf A}_{ii} &= &\mathbf A_{ii} + \mathbf A_{ii}\odot \begin{bmatrix} \theta_{11} & \theta_{12} \\
\theta_{21} & \theta_{22}\end{bmatrix}\nonumber\\
\hat l(i,j) &=& l(i,j) \cdot (1+\delta_{i,j}),
\end{eqnarray}
\noindent where $\mathbf A_{ii}$ and $l(i,j)$ are the nominal values, and $\delta_{ij}$ and $\theta_{ij}$ are independent random perturbations, uniformly distributed in the range $\pm 20\%$. The operator $\odot$ denotes element-wise multiplication. As in (\ref{uncertain:eq}), $\mathbf A$ denotes the nominal value of the state matrix of this $N$-node unstable system, and $\hat{\mathbf A}$ denotes one realization of the perturbed state matrix. The uncertain matrix $\Delta \mathbf A = \hat{\mathbf A} - \mathbf A$ in (\ref{uncertain:eq}). The control input matrix is assumed to be known for this example, so that 
$\hat{\mathbf B}= \mathbf B = \mathbf{1}_N \otimes \mathbf B_{ii}$, where $\mathbf B_{ii}=\begin{bmatrix}0 & 1\end{bmatrix}^T$, and $\otimes$ denotes the Kronecker product \cite{meyer2000matrix}.
In this simulation study, we collected 200 random samples of $\hat{\mathbf A}$. To guarantee closed-loop stability of (\ref{unst:eq}), we numerically compute the worst-case $\hat{\boldsymbol A}$ as,
\begin{equation}
\hat{\boldsymbol A}_{\mathrm{worst}} = \arg\max_{\hat{\boldsymbol A}}\sigma_{\mathrm{max}}(\hat{\boldsymbol A}-\boldsymbol A).
\end{equation}
Using the singular value decomposition, we obtain $\boldsymbol U\boldsymbol S \boldsymbol V^T {=} \hat{\boldsymbol A}_{\mathrm{worst}}{-}\boldsymbol A$.
Normalizing $\boldsymbol S$ by $\sigma_{\mathrm{max}}(\boldsymbol S)$, we set $\boldsymbol B_1 = \sqrt{\sigma_{\mathrm{max}}(\boldsymbol S)}\boldsymbol U$, $\boldsymbol C_1=\sqrt{\sigma_{\mathrm{max}}(\boldsymbol S)}\boldsymbol V^T$ in (\ref{deltaA:eq}).
Due to this normalization, $\gamma=1$.

The following parameters are employed in the simulations. We set $L=2$ and $N=5$, thus $\mathbf A\in \mathbb{R}^{10 \times 10}$, $\mathbf B \in \mathbb{R}^{10 \times 5}$. The output matrix $\mathbf C = \mathbf I_{10}$. The dense feedback matrix $\mathbf K$ has $\mathrm{card}(\mathbf K)=50$. When the feedback controller is completely decentralized, i.e., feedback links only exist between states and controllers within the same node, $\mathrm{card}(\mathbf K)=10$. The performance index for the LQR cost employs $\mathbf Q=100\cdot\mathbf I$ and $\mathbf R=\mathbf I$ in (\ref{qr:eq}) for the centralized problem (\ref{P}). For the noncooperative game (\ref{GNE:eq}), we consider a two-player game as shown in Figure \ref{palm-5nodesys:fig}, where player 1 is in charge of the control inputs in nodes 1 and 3 and player 2 is in charge of the control inputs in nodes 2, 4, 5. The performance index matrices $\mathbf Q_i, \mathbf R_i$, $i=1,2$ for the LQR cost in (\ref{Ji:eq}) satisfy:
\begin{align}
\label{palm-selfishQi:eq}
&\mathbf x^T \mathbf Q_1 \mathbf x {+} \mathbf u_1^T\mathbf R_1 \mathbf u_1 = 100 [(x_{11} {-} x_{13})^2 {+} (x_{21} {-} x_{23})^2] {+} u_1^2 {+} u_3^2   \nonumber\\
&\mathbf x^T \mathbf Q_2 \mathbf x + \mathbf u_2^T\mathbf R_2 \mathbf u_2 = 100\sum_{j=2,4,5}{(x_{1j}^2 + x_{2j}^2)} + \sum_{j=2,4,5} u_j^2.
\end{align}
\noindent We solve all the \glspl{LMI} using the CVX package \cite{cvx}.

\subsection{Social optimization}

First, we present simulation results for the problem (\ref{P}) applied to the system in (\ref{unst:eq}) with $\gamma=1$ in (\ref{P}) over a range of $s$-values. We implement Algorithm \ref{palm-alg1:alg}, with the resulting feedback matrix  denoted as $\mathbf K_{\mathrm{palm}}^*(s)$. For the same problem (\ref{P}), we also use Algorithm \ref{palm-game:alg} applied to the potential game (\ref{palm-pg:eq}), with the solution denoted by $\mathbf K^*_{\mathrm{PALMPG}}(s)$, given the sparsity constraint $s$. For comparison, we also run the GraSP algorithm that was used in \cite{Lian:2018aa}, with the resulting feedback denoted by $\mathbf K_{\mathrm{GraSP}}^*(s)$, initialized by a stabilizing decentralized controller $\mathbf K_{\mathrm{dec}}$ with $\mathrm{card}({\mathbf K}_{\mathrm{dec}})=10$.
In general, GraSP needs to be initialized by a $\mathbf K_0$ that satisfies $\mathrm{card}(\mathbf K_0) \leq s$ and $T_{\infty}(\mathbf K_0) < \gamma$, which in reality might be difficult to find.
In contrast, the \gls{PALM}-based Algorithm \ref{palm-alg1:alg} of this paper does not rely on any such sparse initialization. Finally, we show performance of the dense mixed $H_2/H_{\infty}$ controller using the simple gradient method in \cite{kami2008gradient}.


\begin{figure}[h]
\centering
\subfloat[$J$ vs. sparsity constraint $s$.]{\includegraphics[width=0.5\columnwidth]{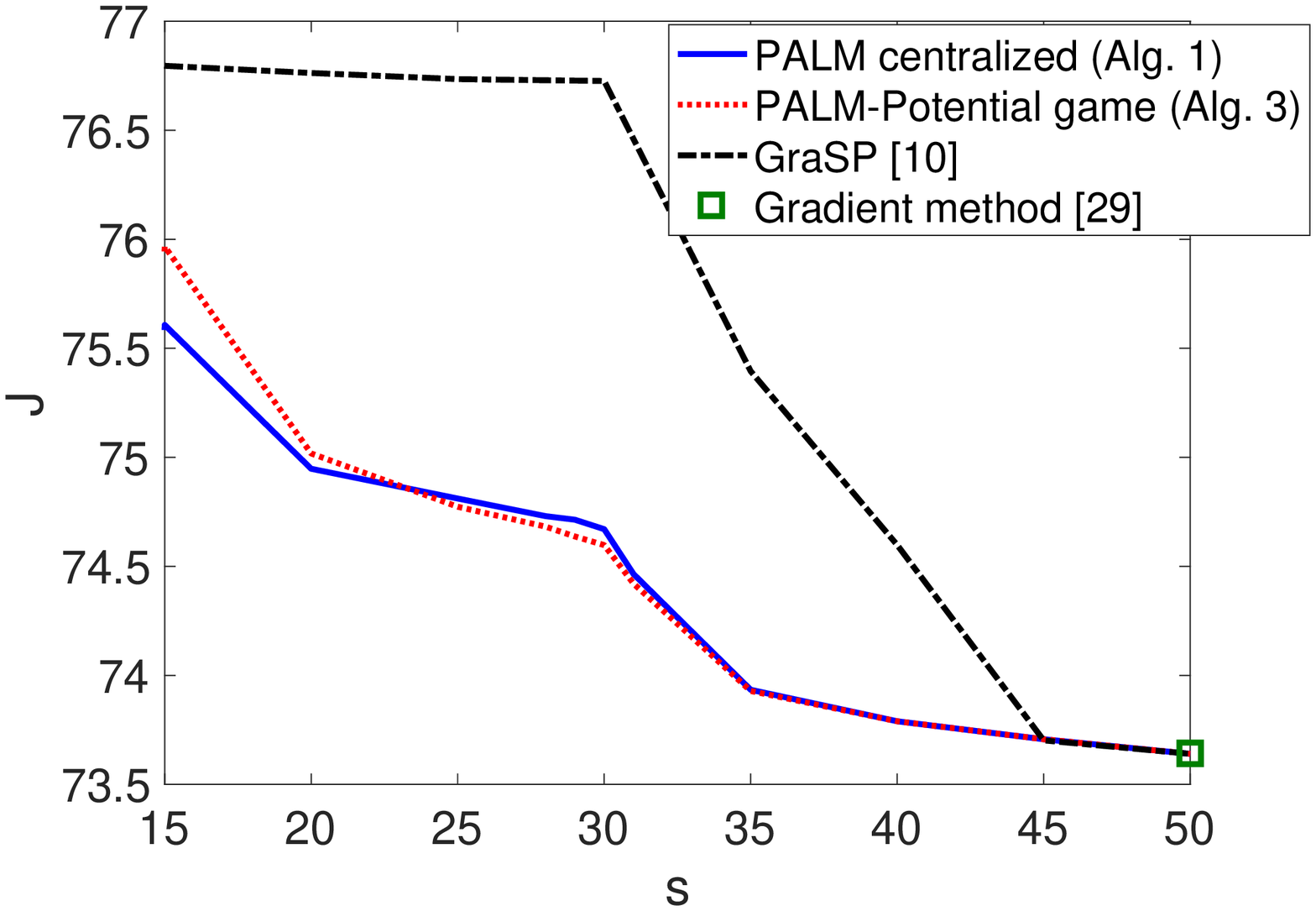}}
\subfloat[$H_{\infty}$ norm v.s. sparsity constraint $s$.]{
\includegraphics[width=0.5\columnwidth]{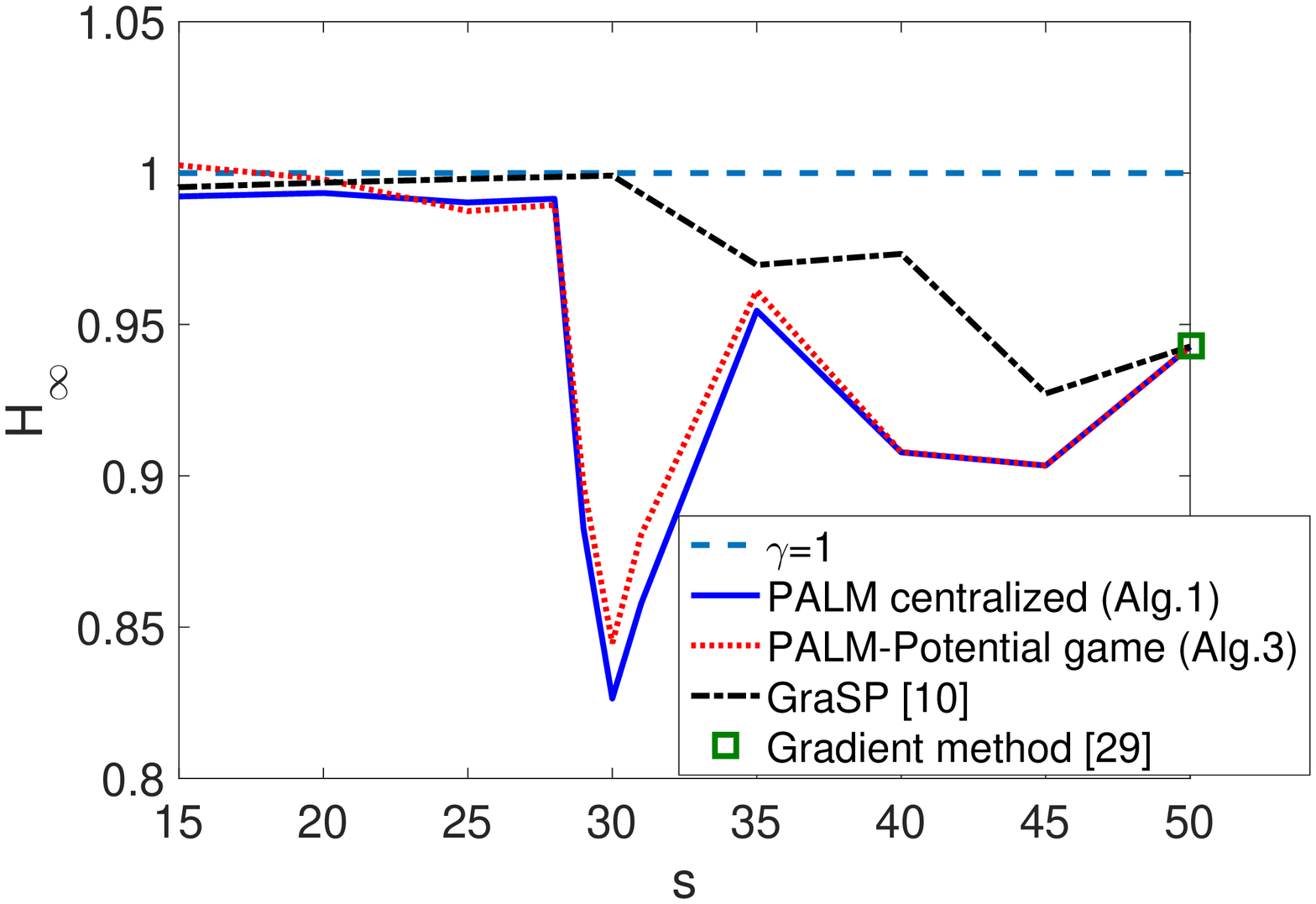}}
\caption{The LQR cost $J$ and $H_{\infty}$ norm vs. sparsity constraint $s$.
}
\label{PALM-J:fig}
\end{figure}

Figure \ref{PALM-J:fig} illustrates the optimal LQR cost $J$ in problem (\ref{P}) and the associated $H_{\infty}$ norm vs. sparsity constraint $s$.
For $15 \leq s \leq 50$, the centralized Algorithm \ref{palm-alg1:alg} and the potential game using Algorithm \ref{palm-game:alg} both converge to a solution with sufficiently small coupling function in (\ref{Pc}), which indicates $\mathbf F \approx \mathbf K$. From Figure \ref{PALM-J:fig}(a), we observe that the $H_2$ norms of all sparsity-constrained methods decrease as $s$ is relaxed, and approach to that of the dense controller \cite{dorfler2014sparsity}. However, the \gls{PALM}-based methods have similar LQR costs and outperform significantly the greedy \gls{GraSP} algorithm in \cite{Lian:2018aa}. In GraSP, the choice of active coordinates only depends on the gradient information of the function $J$. At convergence, the solution of the mixed $H_2/H_{\infty}$ problem has the sparsity structure given by the greedy selection step. For the PALM algorithm, since we iteratively compute the proximal map on $\mathbf X^k$ and $\mathbf Z^k$, the support is chosen based on the information on both the LQR cost $J(\mathbf K)$ and the $H_{\infty}$-norm constraint $T_{\infty}(\mathbf K)$. Thus, at convergence, the \gls{PALM} method finds a critical point of problem (\ref{P}) while \gls{GraSP} does not necessarily achieve it.
Figure \ref{PALM-J:fig}(b) shows the $H_{\infty}$ norms of $\mathbf K^*_{\mathrm{PALM}}(s)$, $\mathbf K^*_{\mathrm{PALMPG}}(s)$ and $\mathbf K^*_{\mathrm{GraSP}}(s)$. We observe that for both GraSP and PALM methods, the solution is found in the interior of the $H_{\infty}$-norm constraint for $s\geq 30$, and on the boundary for $s\leq 25$, which indicates that when the sparsity constraint becomes more stringent, satisfying the sparsity and $H_{\infty}$-norm constraints simultaneously becomes challenging.

Both Algorithm \ref{palm-alg1:alg} (the social optimization) and Algorithm \ref{palm-game:alg} (the potential game) are found to converge for all $s$-values for this system. 
Figure \ref{palm3_deltaK:fig} shows the error in consecutive steps for variable $\mathbf K$ at the end of step 3 of Algorithm \ref{palm-alg1:alg} as a function of iteration step, for different $s$-values. We found that $\Delta \mathbf F_k$ has a similar trend to $\Delta \mathbf K_k$. The errors in consecutive steps are defined as $\Delta \mathbf K^k \triangleq \mathbf K^k - \mathbf K^{k-1}$ and $\Delta \mathbf F^k \triangleq \mathbf F^k - \mathbf F^{k-1}$. We note that the error converges faster for larger $s$-values, which might be explained by the fact that that for $s > 25$, the minima are found in the interior of the $H_{\infty}$-norm constraint set (see Figure \ref{PALM-J:fig}).
 For Algorithm \ref{palm-game:alg} (potential game), the penalized cost function $\Phi_i$ and $\vert|\mathbf K - \mathbf F\vert|_F^2$ (line 9) have similar trends to those for Algorithm \ref{palm-alg1:alg}.
Moreover, it is demonstrated in Fig \ref{palm3_Phi:fig} that although Algorithm \ref{palm-alg1:alg} converges to a critical point of $\mathbf \Phi(\mathbf K, \mathbf F)$, the coupling function $H(\mathbf K, \mathbf F)>0$ for $s<15$. As a result, when Algorithm \ref{palm-alg1:alg} converges for these $s$-values, $\mathbf K \neq \mathbf F$, so a sparse feedback solution that satisfies (\ref{P}) cannot be found. Thus, in Figure \ref{PALM-J:fig}, we only show the LQR cost and $H_{\infty}$-norm for $15\leq s\leq 50$.

\begin{figure}[!t]
\centering
\includegraphics[width=1\columnwidth]{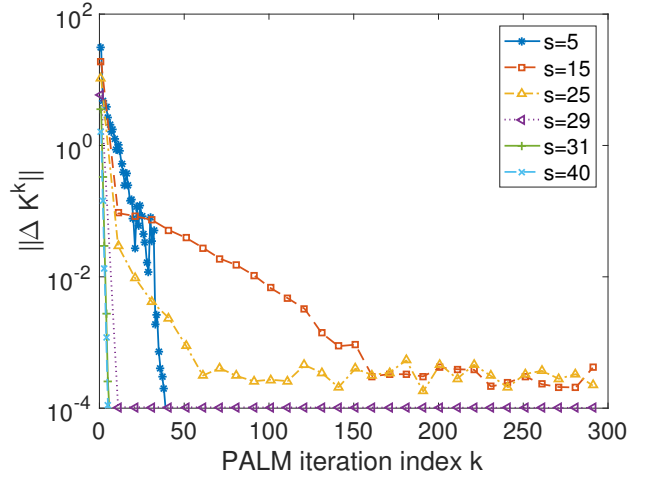}
\caption{The error in $\mathbf K$ vs. iteration $k$ in PALM Algorithm \ref{palm-alg1:alg} Step 2 and 3 for different $s$-values.}
\label{palm3_deltaK:fig}
\end{figure}

\begin{figure}[!t]
\centering
\includegraphics[width=\columnwidth]{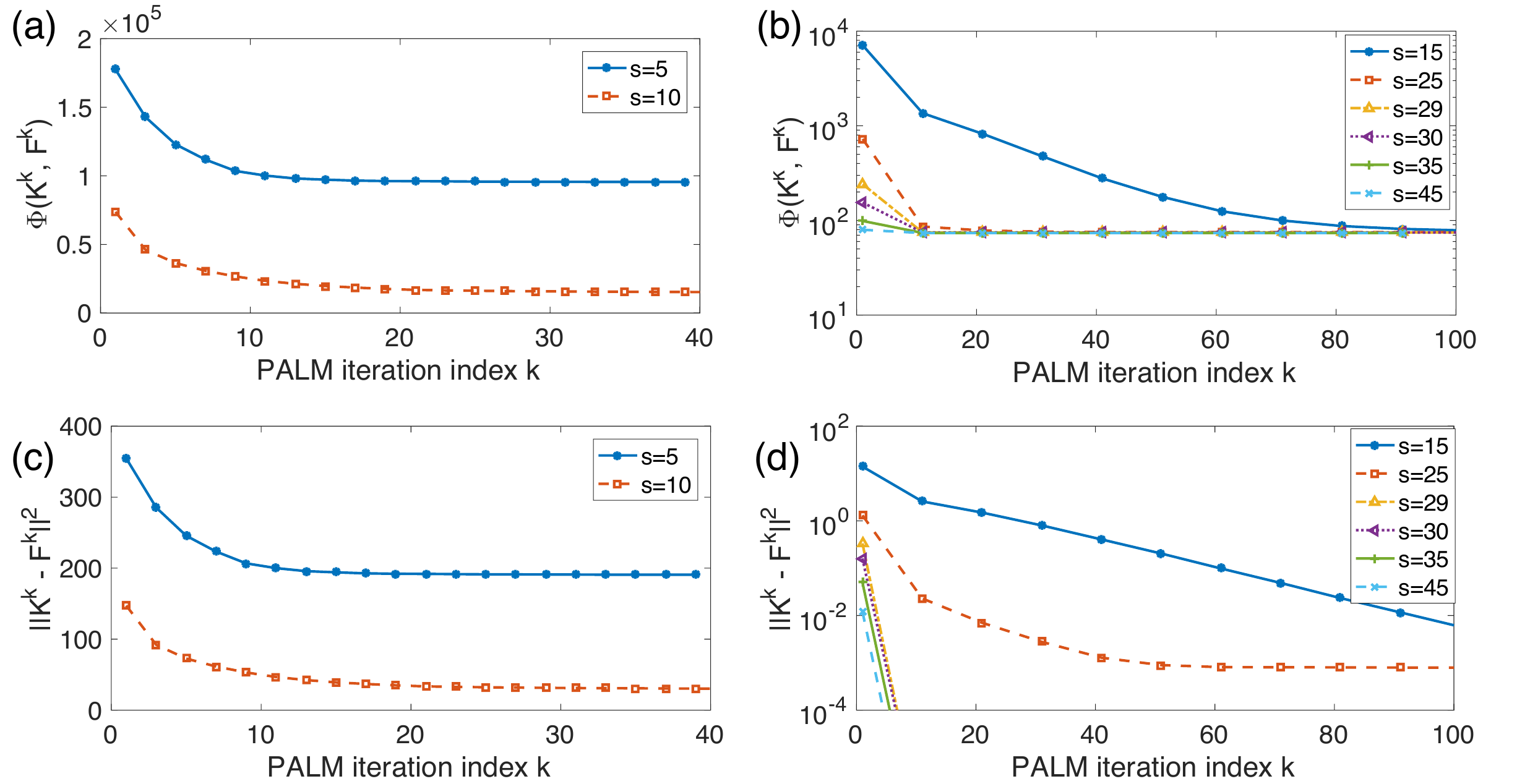}
\caption{The penalized cost function $\Phi(\mathbf K^k, \mathbf F^k)$ and the coupling function $\vert|\mathbf K^k - \mathbf F^k \vert|_F^2$ vs iteration $k$ in the end of Step 3 of Algorithm \ref{palm-alg1:alg} for multiple $s$-values.}
\label{palm3_Phi:fig}
\end{figure}


\subsection{The noncooperative game}

We investigate performance of Algorithm \ref{palm-game:alg} for the noncooperative game with different individual costs (\ref{palm-selfishQi:eq}) for the system (\ref{unst:eq}). We use $\mathbf K^{\mathrm{GNE}}(s) = \{\mathbf K_1^{\mathrm{GNE}}(s), \mathbf K_2^{\mathrm{GNE}}(s)\}$ to denote the two players' feedback produced by Algorithm \ref{palm-game:alg} when the sparsity constraint is given by $s$. 
Figure \ref{PALM-gne_delta:fig} shows the errors in consecutive steps of player $i$'s strategic variables $\mathbf K_i, \mathbf F_i$ for $i=1,2$ vs iteration round $l$ in Algorithm \ref{palm-game:alg}. We observe that both $\vert|\Delta \mathbf K_i \vert|_F$ and $\vert| \Delta \mathbf F_i\vert|_F$ decrease significantly within the first $10$ iterations and then saturate to small values as $l$ grows, resulting in the saturation of the penalized cost function $\Phi_i$ in line 9 of Algorithm \ref{palm-game:alg}, which corresponds to an approximate equilibrium point as discussed in section \ref{palm-game-convg:sec}. The normalized coupling function $\frac{1}{\rho}H(\mathbf K^l, \mathbf F^l)=\vert| \mathbf K^l - \mathbf F^l \vert|_F^2$ (\ref{H:eq}) decreases with iteration $l$, following the trend in Figure \ref{palm3_Phi:fig}.
For $s>20$, the square error $\vert| \mathbf K^l - \mathbf F^l \vert|_F^2$ reaches a sufficiently small value ($<10^{-4}$) at the equilibrium point, while for $s\leq 20$, the square error is larger, causing $T_{\infty}(\mathbf K^{\mathrm{GNE}}(s)) > T_{\infty}(\mathbf K^l)$, which results in $T_{\infty}(\mathbf K^{\mathrm{GNE}}(s))>1$ during  convergence.
One way to avoid this discrepency and still guarantee closed-loop stability is to replace $\gamma$ in (\ref{palm-gK:eq}) with $\gamma-\epsilon$, and provide a margin that compensates for the square error between $\mathbf K$ and $\mathbf F$. For this example, we set $\epsilon=0.01$, so that $T_{\infty}(\mathbf K^{\mathrm{GNE}}(s)) < 0.99$.

Figure \ref{PALM-gne_Ji:fig} illustrates the individual LQR costs $J_i$ as in (\ref{Ji:eq}), and the global $H_{\infty}$ norm when $\mathbf K^{\mathrm{GNE}}(s)$ is implemented. We observe that in Figure \ref{PALM-gne_Ji:fig}(a), for each player $i$, the LQR cost achieved at the equilibrium point $J_i(\mathbf K^{\mathrm{GNE}}(s))$ tends to decrease with $s$, which indicates that there is a trade-off between the selfish LQR cost and the global shared sparsity constraint. Figure \ref{PALM-gne_Ji:fig}(b) shows that  $T_{\infty}(\mathbf K^{\mathrm{GNE}}(s))<1$ for $15 \leq s \leq 45$, indicating that the Nash strategies in $\mathbf K^{\mathrm{GNE}}(s)$ are guaranteed to stabilize the uncertain system (\ref{unst:eq}). 
\begin{figure}[!t]
\centering
\includegraphics[width=\columnwidth]{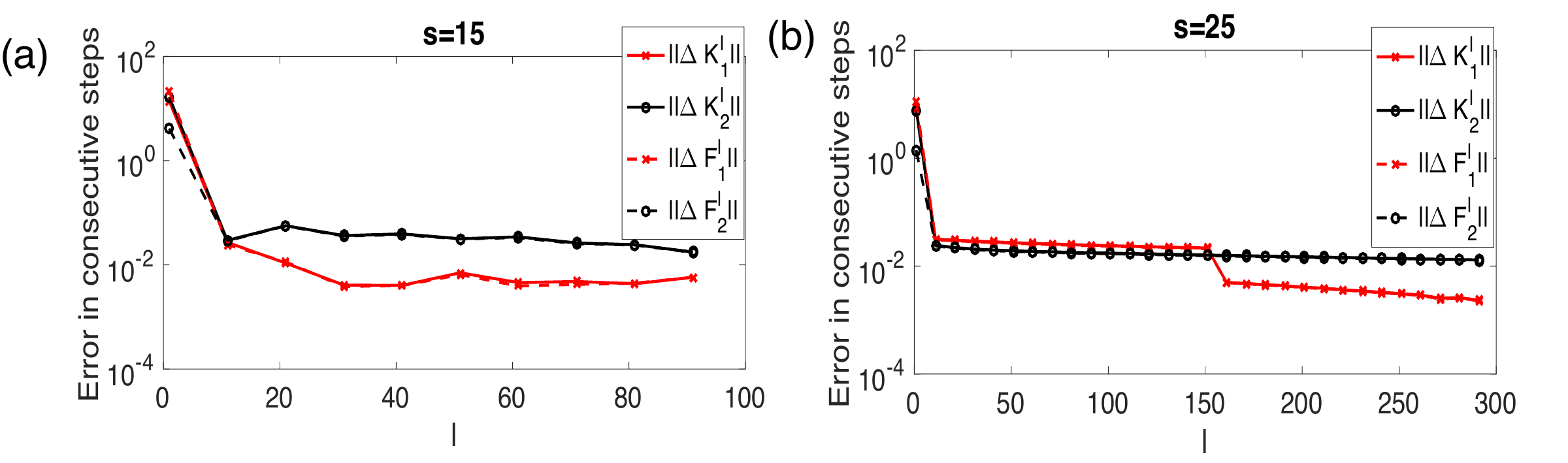}
\caption{Errors in consecutive steps of $\mathbf K_i^l$ and $\mathbf F_i^l $ for players $i=1,2$ vs. step $l$ in Algorithm \ref{palm-game:alg} (the noncooperative game).}
\label{PALM-gne_delta:fig}
\end{figure}

\begin{figure}[!t]
\centering
\subfloat[$J_i(\mathbf K^{\mathrm{GNE}}(s))$ vs. sparsity constraint $s$ at \gls{GNE} for $i=1,2$.]{
\includegraphics[width=1\columnwidth]{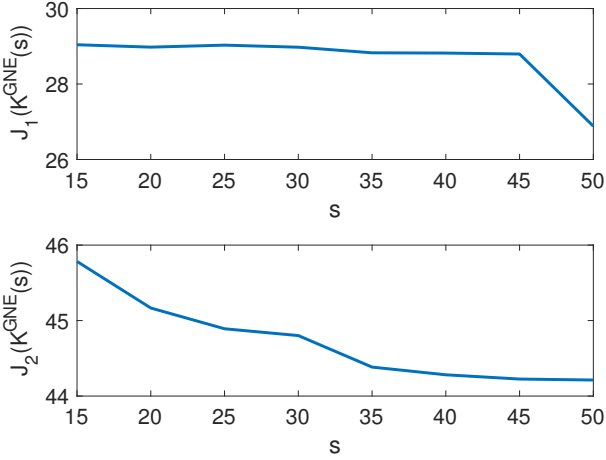}}\\
\subfloat[$T_{\infty}(\mathbf K^{\mathrm{GNE}}(s))$ vs. the sparsity constraint $s$ at \gls{GNE}.]{
\includegraphics[width=1\columnwidth]{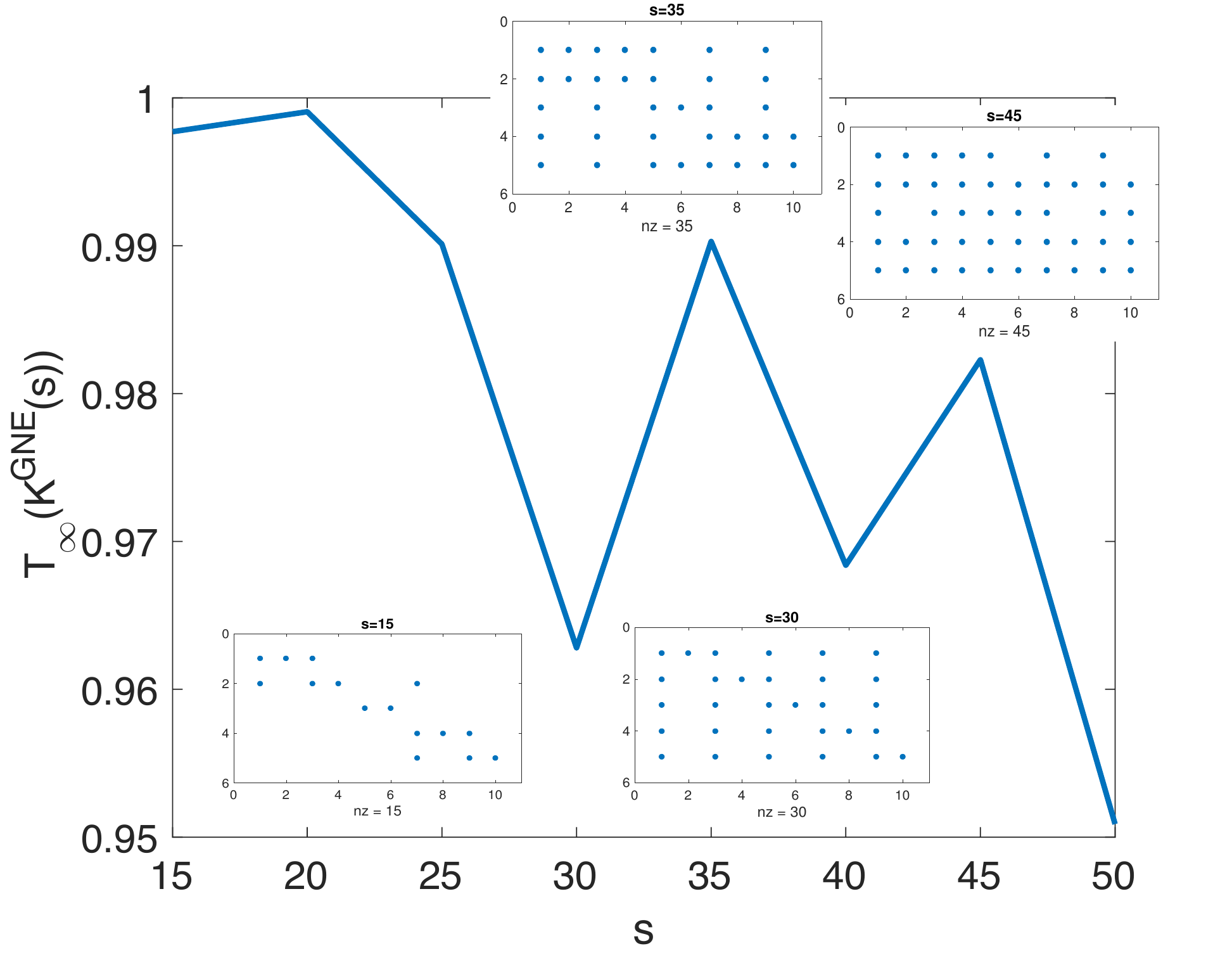}}
\caption{The individual LQR cost and global $H_{\infty}$ norm of $\mathbf K^{\mathrm{GNE}}(s)$ vs sparsity constraint $s$ at \gls{GNE} and the sparsity pattern of $\mathbf K^{\mathrm{GNE}}(s)$ for different $s$ values.}
\label{PALM-gne_Ji:fig}
\end{figure}

\subsection{Algorithm Convergence and Complexity}
\label{palm-game-convg:sec}
We close this section by providing some final comments about the convergence properties and complexity of the proposed algorithms. 
Global convergence of PALM for nonconvex nonsmooth functions was  studied in \cite{bolte2014proximal}, while that of PALM-based output feedback co-design under block-sparsity constraints was established in \cite{lin2017co}. Results in \cite{lin2017co,bolte2014proximal} are extended to analyze the convergence properties of Algorithm \ref{palm-alg1:alg} in \cite{Lian:aa,Lian:2019aa}. It has been proven in \citelatex{bolte2014proximalsupp} that if Lemma \ref{lowsemi:lem}--\ref{lip:lem} in \cite{Lian:aa} hold, then
the sequence generated by PALM algorithm globally converges. In addition, if Lemma \ref{KL:lem} of \cite{Lian:aa} holds, the sequence converges to a critical point \citelatex{bolte2014proximalsupp} of $\Phi$. This confirms convergence of Algorithm \ref{palm-alg1:alg} to a sparsity-constrained mixed $H_2/H_{\infty}$ controller, which corresponds to a critical point of $\Phi$ under mild assumptions on the functions $J$ and $g$.

Next, we briefly discuss the convergence properties of Algorithm \ref{palm-game:alg}. Suppose a \gls{GNE} (\ref{GNE:eq}) is given by $(\mathbf K^*_1, ..., \mathbf K^*_N)$. Then the following condition holds for each player $i$ \cite{hazan2017efficient}:
\begin{equation}
\nabla_{\mathcal{G}_i(\mathbf K^*_{-i}), \eta} J_i(\{ \mathbf K^*_i; \mathbf K^*_{-i}\}) = 0,
\label{palm-projectgrad:eq}
\end{equation}
\noindent where $\nabla_{\mathcal{G}_i(\mathbf K^*_{-i}), \eta} J_i(\{\mathbf K_i; \mathbf K^*_{-i}\})$ is the projected gradient of cost $J_i$ (\ref{Ji:eq}) onto the constraint set $\mathcal{G}_i$ (\ref{ith-constraint:eq}) for player $i$, $i=1,...,N$. Again, from \cite{hazan2017efficient}, we can write
\begin{align}
&\nabla_{\mathcal{G}_i(\mathbf K^*_{-i}), \eta} J_i(\{\mathbf K_i; \mathbf K^*_{-i}\}) \triangleq \nonumber\\
&\frac{1}{\eta}(\mathbf K_i - \Pi_{\mathcal{G}_i(\mathbf K^*_{-i})}[\mathbf K_i - \eta\nabla_{\mathbf K_i} J_i(\{\mathbf K_i, \mathbf K^*_{-i}\})])
\end{align}
\noindent where $\eta>0$ and the operator $\Pi_{\mathcal K}(\cdot)$ denotes projection onto the set $\mathcal K$. 

Similarly, in line 9 of Algorithm \ref{palm-game:alg}, a necessary condition for $\Phi_i$ to achieve its minimum is that the projected gradient $\nabla_{\mathcal{G}_i(\mathbf K^l_{-i}), \eta} J_i(\{\mathbf K_i; \mathbf K^l_{-i}\})=0$. 
Instead of seeking an exact equilibrium point as \gls{GNE}, we assume convergence of Algorithm \ref{palm-game:alg} when  this projected gradient is sufficiently small, which is a necessary condition for an approximate local equilibrium \cite{hazan2017efficient}. At iteration $l$, the $\hat{\mathbf K}_i$ in line 9 can be viewed as an approximation of $\Pi_{\mathcal{G}_i(\mathbf K^l_{-i})}[\mathbf K_i^{l-1} - \eta\nabla_{\mathbf K_i} J_i(\{\mathbf K_i^{l-1}, \mathbf K^l_{-i})\}]$. Thus, the norm of the projected gradient is proportional to $\vert| \mathbf K_i^{l} - \mathbf K_i^{l-1} \vert|$, implying that small values of  $\Delta \mathbf K_i^l$ and $\Delta \mathbf F_i^l$ indicate convergence of Algorithm \ref{palm-game:alg}. This is illustrated in Figure \ref{PALM-gne_delta:fig}. Moreover, we note that there is no theoretical guarantee for the existence of \gls{GNE} for the game in (\ref{ith-game:eq}). If a \gls{GNE} exists for the potential game (\ref{palm-pg:eq}) then this \gls{GNE} satisfies the necessary condition for the minimizer of (\ref{P}).

The main numerical complexity of Algorithms \ref{palm-alg1:alg} and \ref{palm-game:alg} is dominated by the $\mathbf K$-minimization step (Step 3 of Algorithm \ref{palm-alg1:alg} and line 9 of Algorithm \ref{palm-game:alg}), which has polynomial complexity on the number of variables in the feedback matrix \cite{gahinet1994lmi}.

\section{Conclusion}
\label{palm-concl:sec}

The \gls{PALM} method was exploited to solve the sparsity-constrained mixed $H_2/H_{\infty}$ control problem for multi-agent systems. First, a centralized social-optimization algorithm was investigated. Second, we developed noncooperative and potential games that have partially-distributed computation. The proposed algorithms were validated using an open-loop unstable  network dynamic system. It was demonstrated that the centralized \gls{PALM} method outperforms the \gls{GraSP}-based method for most sparsity levels, and converges both theoretically as well as in simulation results. Moreover, a best-response dynamics algorithm for the proposed games converges to an approximate \gls{GNE} point. The performance of the potential game for social optimization closely approximates that of the centralized algorithm.







\bibliographystyle{IEEEtran}
\bibliography{palmref}

\begin{thebibliography}{}

\bibitem[Bazaraa et~al., 2013]{bazaraa2013nonlinearsupp}
Bazaraa, M.~S., Sherali, H.~D., and Shetty, C.~M. (2013).
\newblock {\em Nonlinear programming: theory and algorithms}.
\newblock John Wiley \& Sons.

\bibitem[Bolte et~al., 2014]{bolte2014proximalsupp}
Bolte, J., Sabach, S., and Teboulle, M. (2014).
\newblock Proximal alternating linearized minimization or nonconvex and
  nonsmooth problems.
\newblock {\em Mathematical Programming}, 146(1-2):459--494.

\bibitem[Lin and Adetola, 2017]{lin2017cosupp}
Lin, F. and Adetola, V. (2017).
\newblock Co-design of sparse output feedback and row/column-sparse output
  matrix.
\newblock In {\em American Control Conference (ACC), 2017}, pages 4359--4364.
  IEEE.

\bibitem[Luenberger and Ye, 1984]{luenberger1984linearsupp}
Luenberger, D.~G. and Ye, Y. (1984).
\newblock {\em Linear and nonlinear programming}, volume~2.
\newblock Springer.

\bibitem[Netzer, 2016]{netzer2016realsupp}
Netzer, T. (2016).
\newblock Real algebraic geometry and its applications.
\newblock {\em arXiv preprint arXiv:1606.07284}.

\bibitem[Rautert and Sachs, 1997]{rautert1997computationalsupp}
Rautert, T. and Sachs, E.~W. (1997).
\newblock Computational design of optimal output feedback controllers.
\newblock {\em SIAM Journal on Optimization}, 7(3):837--852.

\end{thebibliography}


\begin{thebibliography}{10}
\providecommand{\url}[1]{#1}
\csname url@samestyle\endcsname
\providecommand{\newblock}{\relax}
\providecommand{\bibinfo}[2]{#2}
\providecommand{\BIBentrySTDinterwordspacing}{\spaceskip=0pt\relax}
\providecommand{\BIBentryALTinterwordstretchfactor}{4}
\providecommand{\BIBentryALTinterwordspacing}{\spaceskip=\fontdimen2\font plus
\BIBentryALTinterwordstretchfactor\fontdimen3\font minus
  \fontdimen4\font\relax}
\providecommand{\BIBforeignlanguage}[2]{{%
\expandafter\ifx\csname l@#1\endcsname\relax
\typeout{** WARNING: IEEEtran.bst: No hyphenation pattern has been}%
\typeout{** loaded for the language `#1'. Using the pattern for}%
\typeout{** the default language instead.}%
\else
\language=\csname l@#1\endcsname
\fi
#2}}
\providecommand{\BIBdecl}{\relax}
\BIBdecl

\bibitem{mihailo}
F.~Lin, M.~Fardad, and M.~R. Jovanovi{\'c}, ``Design of optimal sparse feedback
  gains via the alternating direction method of multipliers,'' \emph{IEEE
  Trans. on Aut. Ctrl.}, vol.~58, no.~9, pp. 2426--2431, 2013.

\bibitem{lian2017game}
F.~Lian, A.~Chakrabortty, and A.~Duel-Hallen, ``Game-theoretic multi-agent
  control and network cost allocation under communication constraints,''
  \emph{IEEE Journal on Selected Areas in Communications}, vol.~35, no.~2, pp.
  330--340, 2017.

\bibitem{javad}
N.~Monshizadeh, H.~L. Trentelman, and M.~K. Camlibel, ``Projection-based model
  reduction of multi-agent systems using graph partitions,'' \emph{IEEE Trans.
  on Control of Network Systems}, vol.~1, no.~2, pp. 145--154, 2014.

\bibitem{dorfler2014sparsity}
F.~D\"{o}rfler, M.~R. Jovanovi\'{c}, M.~Chertkov, and F.~Bullo,
  ``Sparsity-promoting optimal wide-area control of power networks,''
  \emph{IEEE Trans. on Power Systems}, vol.~29, no.~5, pp. 2281--2291, 2014.

\bibitem{lin2017co}
F.~Lin and V.~Adetola, ``Co-design of sparse output feedback and
  row/column-sparse output matrix,'' in \emph{American Control Conference
  (ACC), 2017}.\hskip 1em plus 0.5em minus 0.4em\relax IEEE, 2017, pp.
  4359--4364.

\bibitem{matni2016regularization}
N.~Matni and V.~Chandrasekaran, ``Regularization for design,'' \emph{IEEE
  Transactions on Automatic Control}, vol.~61, no.~12, pp. 3991--4006, 2016.

\bibitem{lidstrom2016optimal}
C.~Lidstr\"{o}m and A.~Rantzer, ``Optimal {$H_{\infty}$} state feedback for
  systems with symmetric and hurwitz state matrix,'' in \emph{American Control
  Conference}, 2016, pp. 3366--3371.

\bibitem{arastoo2016closed}
R.~Arastoo, M.~Bahavarnia, M.~V. Kothare, and N.~Motee, ``Closed-loop feedback
  sparsification under parametric uncertainties,'' in \emph{IEEE 55th
  Conference on Decision and Control}, 2016, pp. 123--128.

\bibitem{bahavarnia2017sparse}
M.~Bahavarnia and N.~Motee, ``Sparse memoryless {LQR} design for uncertain
  linear time-delay systems,'' \emph{IFAC-PapersOnLine}, vol.~50, no.~1, pp.
  10\,395--10\,400, 2017.

\bibitem{bahavarnia2017state}
M.~Bahavarnia, C.~Somarakis, and N.~Motee, ``State feedback controller
  sparsification via a notion of non-fragility,'' in \emph{Decision and Control
  (CDC), 2017 IEEE 56th Annual Conference on}.\hskip 1em plus 0.5em minus
  0.4em\relax IEEE, 2017, pp. 4205--4210.

\bibitem{seuken2008formal}
S.~Seuken and S.~Zilberstein, ``Formal models and algorithms for decentralized
  decision making under uncertainty,'' \emph{Autonomous Agents and Multi-Agent
  Systems}, vol.~17, no.~2, pp. 190--250, 2008.

\bibitem{ogren2004cooperative}
P.~Ogren, E.~Fiorelli, and N.~E. Leonard, ``Cooperative control of mobile
  sensor networks: Adaptive gradient climbing in a distributed environment,''
  \emph{IEEE Transactions on Automatic control}, vol.~49, no.~8, pp.
  1292--1302, 2004.

\bibitem{jungers2008bounded}
M.~Jungers, E.~B. Castelan, E.~R. De~Pieri, and H.~Abou-Kandil, ``Bounded nash
  type controls for uncertain linear systems,'' \emph{Automatica}, vol.~44,
  no.~7, pp. 1874--1879, 2008.

\bibitem{de2017finite}
N.~de~la Cruz and M.~Jimenez-Lizarraga, ``Finite time robust feedback nash
  equilibrium for linear quadratic games,'' \emph{IFAC-PapersOnLine}, vol.~50,
  no.~1, pp. 11\,794--11\,799, 2017.

\bibitem{basar85}
T.~Ba\c{s}ar and G.~J. Olster, \emph{Dynamic noncooperative game theory}.\hskip
  1em plus 0.5em minus 0.4em\relax SIAM, 1995, vol. 200.

\bibitem{mukaidani2009robust}
H.~Mukaidani, ``Robust guaranteed cost control for uncertain stochastic systems
  with multiple decision makers,'' \emph{Automatica}, vol.~45, no.~7, pp.
  1758--1764, 2009.

\bibitem{mukaidani2014h}
------, ``{H}$_2$/{H}$_{\infty}$ control problem for stochastic delay systems
  with multiple decision makers,'' in \emph{Decision and Control (CDC), 2014
  IEEE 53rd Annual Conference on}.\hskip 1em plus 0.5em minus 0.4em\relax IEEE,
  2014, pp. 2648--2653.

\bibitem{mukaidani2015stackelberg}
H.~Mukaidani and H.~Xu, ``Stackelberg strategies for stochastic systems with
  multiple followers,'' \emph{Automatica}, vol.~53, pp. 53--59, 2015.

\bibitem{mukaidani2016dynamic}
H.~Mukaidani, H.~Xu, and V.~Dragan, ``Dynamic games for markov jump stochastic
  delay systems,'' in \emph{Recent Results on Time-Delay Systems}.\hskip 1em
  plus 0.5em minus 0.4em\relax Springer, 2016, pp. 207--227.

\bibitem{vrabie2010integral}
D.~Vrabie and F.~Lewis, ``Integral reinforcement learning for online
  computation of feedback nash strategies of nonzero-sum differential games,''
  in \emph{Decision and Control (CDC), 2010 49th IEEE Conference on}.\hskip 1em
  plus 0.5em minus 0.4em\relax IEEE, 2010, pp. 3066--3071.

\bibitem{vamvoudakis2015non}
K.~G. Vamvoudakis, ``Non-zero sum {N}ash {Q}-learning for unknown deterministic
  continuous-time linear systems,'' \emph{Automatica}, vol.~61, pp. 274--281,
  2015.

\bibitem{song2017off}
R.~Song, F.~L. Lewis, and Q.~Wei, ``Off-policy integral reinforcement learning
  method to solve nonlinear continuous-time multiplayer nonzero-sum games,''
  \emph{IEEE transactions on neural networks and learning systems}, vol.~28,
  no.~3, pp. 704--713, 2017.

\bibitem{vamvoudakis2017game}
K.~G. Vamvoudakis, H.~Modares, B.~Kiumarsi, and F.~L. Lewis, ``Game
  theory-based control system algorithms with real-time reinforcement learning:
  How to solve multiplayer games online,'' \emph{IEEE Control Systems},
  vol.~37, no.~1, pp. 33--52, 2017.

\bibitem{Lian:2018aa}
F.~Lian, A.~Chakrabortty, F.~Wu, and A.~Duel-Hallen, ``Sparsity-constrained
  mixed {H}$_2$/{H}$_{\infty}$ control,'' in \emph{American Control Conference
  (ACC), 2018}, 2018.

\bibitem{bolte2014proximal}
J.~Bolte, S.~Sabach, and M.~Teboulle, ``Proximal alternating linearized
  minimization or nonconvex and nonsmooth problems,'' \emph{Mathematical
  Programming}, vol. 146, no. 1-2, pp. 459--494, 2014.

\bibitem{bahmani2013greedy}
S.~Bahmani, B.~Raj, and P.~T. Boufounos, ``Greedy sparsity-constrained
  optimization,'' \emph{The Journal of Machine Learning Research}, vol.~14,
  no.~1, pp. 807--841, 2013.

\bibitem{paccagnan2016distributed}
D.~Paccagnan, B.~Gentile, F.~Parise, M.~Kamgarpour, and J.~Lygeros,
  ``Distributed computation of generalized nash equilibria in quadratic
  aggregative games with affine coupling constraints,'' in \emph{Decision and
  Control (CDC), 2016 IEEE 55th Conference on}.\hskip 1em plus 0.5em minus
  0.4em\relax IEEE, 2016, pp. 6123--6128.

\bibitem{kami2008gradient}
Y.~Kami and E.~Nobuyama, ``A gradient method for the static output feedback
  mixed {H}$_2$/{H}$_{\infty}$ control,'' \emph{IFAC Proceedings Volumes},
  vol.~41, no.~2, pp. 7838--7842, 2008.

\bibitem{parikh2014proximal}
N.~Parikh, S.~Boyd \emph{et~al.}, ``Proximal algorithms,'' \emph{Foundations
  and Trends{\textregistered} in Optimization}, vol.~1, no.~3, pp. 127--239,
  2014.

\bibitem{Lian:aa}
F.~Lian, A.~Chakrabortty, and A.~Duel-Hallen, ``Supplementary materials for
  `{G}ame-theoretic mixed $h_2/h_{\infty}$ control with sparsity constraint for
  multi-agent networked control systems'.''

\bibitem{Lian:2019aa}
F.~Lian, ``Communication-cost-constrained algorithms and games for multi-agent
  control systems,'' Ph.D. dissertation, North Carolina State University,
  Raleigh, NC, USA, 2019.

\bibitem{luenberger1984linear}
D.~G. Luenberger and Y.~Ye, \emph{Linear and nonlinear programming}.\hskip 1em
  plus 0.5em minus 0.4em\relax Springer, 1984, vol.~2.

\bibitem{bazaraa2013nonlinear}
M.~S. Bazaraa, H.~D. Sherali, and C.~M. Shetty, \emph{Nonlinear programming:
  theory and algorithms}.\hskip 1em plus 0.5em minus 0.4em\relax John Wiley \&
  Sons, 2013.

\bibitem{saeki2006static}
M.~Saeki, ``Static output feedback design for {H}$_\infty$ control by descent
  method,'' in \emph{45th IEEE Conference on Decision and Control}, 2006, pp.
  5156--5161.

\bibitem{li2013designing}
N.~Li and J.~R. Marden, ``Designing games for distributed optimization,''
  \emph{IEEE Journal of Selected Topics in Signal Processing}, vol.~7, no.~2,
  pp. 230--242, 2013.

\bibitem{motee2008optimal}
N.~Motee and A.~Jadbabaie, ``Optimal control of spatially distributed
  systems,'' \emph{IEEE Trans. on Aut. Ctrl.}, vol.~53, no.~7, pp. 1616--1629,
  2008.

\bibitem{meyer2000matrix}
C.~D. Meyer, \emph{Matrix analysis and applied linear algebra}.\hskip 1em plus
  0.5em minus 0.4em\relax Siam, 2000, vol.~71.

\bibitem{cvx}
M.~Grant and S.~Boyd, ``{CVX}: Matlab software for disciplined convex
  programming, version 2.1,'' \url{http://cvxr.com/cvx}, Mar. 2014.

\bibitem{hazan2017efficient}
E.~Hazan, K.~Singh, and C.~Zhang, ``Efficient regret minimization in non-convex
  games,'' \emph{arXiv preprint arXiv:1708.00075}, 2017.

\bibitem{gahinet1994lmi}
P.~Gahinet, A.~Nemirovskii, A.~J. Laub, and M.~Chilali, ``The lmi control
  toolbox,'' in \emph{Proceedings of 1994 33rd IEEE Conference on Decision and
  Control}, vol.~3.\hskip 1em plus 0.5em minus 0.4em\relax IEEE, 1994, pp.
  2038--2041.

\end{thebibliography}

\title{Supplemental Materials for ``Game-Theoretic Mixed $H_2/H_{\infty}$ Control with
Sparsity Constraint for Multi-agent Networked Control Systems'' by Feier Lian, Aranya Chakrabortty, and Alexandra Duel-Hallen}

\makeatletter
\def\@seccntformat#1{\@ifundefined{#1@cntformat}%
   {\csname the#1\endcsname\quad}
   {\csname #1@cntformat\endcsname}
}
\makeatother

\newpage
\clearpage

\IEEEpeerreviewmaketitle

\setcounter{assumption}{0}
\setcounter{equation}{0}
\renewcommand{\theequation}{S\arabic{equation}}%

\setcounter{section}{0}
\makeatletter 
\newcommand{\section@cntformat}{S.\thesection:\ }
\makeatother

\section{Overview of Zoutendijk's method}
 \label{Zoute:appendix}

The Zoutendijk's method \citelatex{bazaraa2013nonlinearsupp} is an approach to constrained optimization, where an improving feasible direction is generated by solving a subproblem, usually a linear program. We briefly overview Zoutendijk's method for the case of nonlinear inequality constraints.

Consider the following constrained optimization problem:
 \begin{eqnarray}
 \label{palm-general:eq}
 \mbox{Minimize} &  &f(\mathbf x)\nonumber\\
 \mbox{s.t.} & &g_i(\mathbf x) \leq 0, ~i=1,...,m~, 
 \end{eqnarray}
 where $\mathbf x\in \mathbb{R}^{n\times 1}$ and $f(\mathbf x)$ and $g_i(\mathbf x)$ are differentiable at $\mathbf x$. At point $\mathbf x$, $I$ is the set of active constraint $I = \{i| g_i(\mathbf x)=0\}$. An improving feasible direction $\mathbf d$ can be found by the following linear programming problem \citelatex{bazaraa2013nonlinearsupp}:
 \begin{eqnarray}
 \underset{z, \mathbf d}{\mbox{Maximize}} & & z\nonumber\\
 \mbox{s. t.} & & \nabla f(\mathbf x)^T\mathbf d + z\leq 0,\nonumber\\
 & & \nabla g_i(\mathbf x)^T\mathbf d + z \leq 0 ~\forall i\in I, \nonumber\\
 & & -1\leq d_j \leq 1, ~\forall j=1,...,n,
 \label{Zou:eq}
 \end{eqnarray}
 where the {third} normalizing constraint prevents the optimal $z$ from approaching $\infty$. It was shown \citelatex{bazaraa2013nonlinearsupp} that if the optimal value of $z$, denoted as $z^*$, satisfies $z^*>0$, then $\mathbf d$ is an improving direction since $\mathbf d$ satisfies $\nabla f(\mathbf x)^T\mathbf d < 0$ and $\nabla g_i(\mathbf x)^T\mathbf d < 0 ~\forall i\in I$. Otherwise if $z^*=0$, then the current $\mathbf x$ is a Fritz John point \citelatex{bazaraa2013nonlinearsupp}, which satisfies the necessary condition for the local minimum of (\ref{palm-general:eq}).

\section{Definitions of Terms in Section II-C}

\begin{defn} (Lipschitz constant)
A function $f: \mathbb{R}^d \rightarrow \mathbb{R}$ with the gradient function $\nabla f$ is Lipschitz continuous with Lipschitz constant $L$ on $\mathcal S \in \mathbb{R}^d$ if $\vert| \nabla f(\mathbf x) - \nabla f(\mathbf y) \vert| \leq L\vert| \mathbf x - \mathbf y\vert|$ for all $\mathbf x, \mathbf y \in \mathcal S$ \citelatex{bazaraa2013nonlinearsupp}.
\end{defn}

\begin{defn} (Proper)
\label{proper:defn}
The function $\sigma: \mathcal{S} \rightarrow \mathbb{R}$ is a proper function if $\sigma(\mathbf x)>-\infty$ for all $\mathbf x \in \mathcal S$, and $\sigma(\mathbf x) < \infty$ for at least one point $\mathbf x\in \mathcal S$.
\end{defn}
\begin{defn}(Lower semicontinuous)
\label{lowersc:defn}
The function $\sigma: \mathcal{S} \rightarrow \mathbb{R}$ is lower semicontinuous at $\bar{\mathbf x}\in \mathcal{S}$ if for all $\epsilon>0$ there exists a $\delta$ such that $\mathbf x\in \mathcal{S}$ and $\vert| \mathbf x - \hat{\mathbf x}\vert| < \delta$ imply $\sigma(\mathbf x) - \sigma(\bar{\mathbf x}) > -\epsilon$.
\end{defn}

\section{Notation used in {Kurdyka-{\L}ojasiewicz (KL) Property}, employed in convergence analysis of Algorithm \ref{palm-alg1:alg}}

\begin{defn} (Distance.)
For any subset $\mathcal S \subset \mathbb{R}^d$ and any point $x \in \mathbb{R}^d$, the distance from $x$ to $\mathcal S$ is defined and denoted by
\begin{equation}
\mathrm{dist}(\mathbf{x}, \mathcal S) := \inf \{ \vert| \mathbf{y} - \mathbf{x}\vert| : \mathbf{y} \in \mathcal S\}.
\end{equation}
\end{defn}
\noindent When $\mathcal S = \emptyset$, we have $\mathrm{dist}(\mathbf{x}, \mathcal S) = \infty$ for all $\mathbf{x}$. 

Let $\eta \in [0, \infty]$. We denote by $\Phi_{\eta}$ the class of all concave and continuous functions $\varphi: [0, \eta) \rightarrow \mathbb{R}_+$ which satisfy the following conditions:\\
(i) $\varphi(0)=0$.\\
(ii) $\varphi$ has first-order continous derivative on $(0, \eta)$ and continous at $0$;\\
(iii) for all $s\in (0, \eta): \varphi'(s) > 0$.

\noindent For proper and lower semicontinous functions, the subdifferentials are defined below \citelatex{bolte2014proximalsupp}:
\begin{defn} (Subdifferentials)
\label{subdifferentials:defn}
Let $\sigma: \mathbb{R}^d\rightarrow (-\infty, \infty]$ be a proper and lower semicontinous function.\\
(i) For a given $\mathbf x \in \mathrm{dom}\, \sigma$, the Fr\'echet subdifferential of $\sigma$ at $x$, written as $\hat{\partial}\sigma(\mathbf x)$, is the set of all vectors $\mathbf u\in \mathbb{R}^d$ which satisfy
\begin{equation}
\lim_{\mathbf y\neq \mathbf x}\inf_{\mathbf y\rightarrow x}\frac{\sigma(\mathbf y) - \sigma(\mathbf x) - \langle \mathbf u, \mathbf y - \mathbf x\rangle}{\vert| \mathbf y-\mathbf x \vert|} \geq 0.
\end{equation}
When $\mathbf x\notin \mathrm{dom} \sigma$, we set $\hat{\partial}\sigma(\mathbf x)=\emptyset$.\\
(ii) The limiting subdifferential, or subdifferential, of $\sigma$ at $\mathbf x \in \mathbb{R}^d$, written $\partial \sigma(\mathbf x)$, is defined as
\begin{eqnarray}
\partial \sigma(\mathbf x)&:=&\left\{ \mathbf u\in \mathbb{R}^d: \exists \mathbf x^k \rightarrow \mathbf x, \sigma(\mathbf x^k) \rightarrow \sigma(\mathbf x)\right.\\ &&\left.~\mathrm{and}~ \mathbf u^k \in \hat{\partial}\sigma(\mathbf x^k) \rightarrow \mathbf u ~\mathrm{as}~ k\rightarrow \infty \right\}.
\end{eqnarray}
\end{defn}
\noindent Points whose subdifferentials contains $0$ are called \textit{(limiting-)critical points.}

\begin{defn} (Kurdyka-{\L}ojasiewicz (KL) Property)
\label{KL:defn}
Let $\sigma: \mathbb{R}^d \rightarrow (-\infty, +\infty]$ be proper and lower semicontinuous. \\
(i) The function $\sigma$ is said to have the \gls{KL} Property at $\bar{\mathbf{u}} \in \mathrm{dom} \,\partial \sigma := \{ \mathbf{u}\in \mathbb{R}^d: \partial \sigma(\mathbf{u}) \neq \emptyset\}$ if there exist $\eta\in (0, \infty]$, a neighorhood $\mathcal U$ of $\bar{\mathbf{u}}$ and a function $\varphi \in \Phi_{\eta}$, such that for all
\begin{equation}
\mathbf u \in \mathcal U \cap \left[ \sigma(\bar{\mathbf u}) < \sigma(\mathbf u) < \sigma(\bar{\mathbf u}) + \eta \right],
\end{equation}
\noindent the following inequality holds
\begin{equation}
\varphi'(\sigma(\mathbf u) - \sigma(\bar{\mathbf u}))\, \mathrm{dist}(0, \partial \sigma(\mathbf u)) \geq 1.
\end{equation}
(ii) If $\sigma$ satisfies the \gls{KL} property at each point of $\mathrm{dom} \, \partial\sigma$, then $\sigma$ is called a \gls{KL} function.
\end{defn}
It is shown in \citelatex{bolte2014proximalsupp} that \gls{KL} functions arise in many applications for optimization, in particular, semi-algebraic functions are \gls{KL} functions. The definitions for semi-algebraic function is given as follows.
\begin{defn}
(Semi-algebraic sets and functions). (i) A subset $\mathcal S \in \mathbb{R}^d$ is real semi-algebraic set if there exists a finite number of real polynomial functions $g_{ij}, h_{ij}: \mathbb{R}^d \rightarrow \mathbb{R}$ such that
\begin{equation}
\mathcal S = \cup_{j=1}^p\cap_{i=1}^q\left\{ \mathbf u \in \mathbb{R}^d: g_{ij}(\mathbf u) = 0 \mbox{ and } h_{ij}(\mathbf u) < 0\right\}.
\end{equation}
(ii) A function $h: \mathbb{R}^d \rightarrow (-\infty, +\infty]$ is called semi-algebraic if its graph
\begin{equation}
\left\{ (\mathbf u, t) \in \mathbb{R}^{d+1}: h(\mathbf u) = t\right\}
\end{equation}
is a semi-algebraic subset of $\mathbb{R}^{d+1}$.
\end{defn}

\section{Proof of Global Convergence of Algorithm \ref{palm-alg1:alg}}
\label{palm-centra-convg:sec}


In this section, we employ results from  \citelatex{lin2017cosupp,bolte2014proximalsupp} to analyze convergence of Algorithm \ref{palm-alg1:alg}.
To simplify notation, we define $\tilde g(\mathbf K) \triangleq J(\mathbf K) + g(\mathbf K)$, where $J(\mathbf K)$ is the performance index of $H_2$ cost (\ref{Jk:eq}) and $g(\mathbf K)$ is the indicator function for the $H_{\infty}$ constraint (\ref{palm-gK:eq}).



\begin{lem}
\label{lowsemi:lem}
$\tilde g: \mathbb{R}^{m\times p} \rightarrow (-\infty, \infty]$ and $f: \mathbb{R}^{m\times p} \rightarrow (-\infty, \infty]$ are proper and lower semicontinuous functions.  
\end{lem}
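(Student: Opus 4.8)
The plan is to handle the two functions separately, since properness is immediate in both cases and the real content is lower semicontinuity, which I would reduce to closedness of the underlying constraint sets together with the regularity of the $H_2$ cost $J$ on the stabilizing region. For $f$ in (\ref{palm-fF:eq}) the argument is routine: $f$ takes only the values $0$ and $+\infty$, so it never equals $-\infty$, and $f(\mathbf 0)=0<+\infty$ because the zero matrix has cardinality $0\le s$; hence $f$ is proper. For lower semicontinuity I would observe that $f$ is the indicator of $\mathcal X=\{\mathbf F:\mathrm{card}(\mathbf F)\le s\}$, which is the finite union, over all index sets $S$ of size at most $s$, of the coordinate subspaces $\{\mathbf F: F_{ij}=0 \text{ for } (i,j)\notin S\}$. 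Each such subspace is closed and the union is finite, so $\mathcal X$ is closed; the indicator of a closed set is lower semicontinuous, which disposes of $f$.

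Turning to $\tilde g=J+g$, properness is again straightforward: the solution $\mathbf P$ of the Lyapunov equation following (\ref{Jk:eq}) is positive semidefinite whenever $\mathbf A_{cl}(\mathbf K)$ is Hurwitz, so $J(\mathbf K)=\mathrm{trace}(\mathbf B_2^T\mathbf P\mathbf B_2)\ge 0$ and thus $\tilde g\ge 0>-\infty$; moreover $\tilde g$ is finite at any stabilizing gain with $T_\infty(\mathbf K)<\gamma$, which exists by hypothesis (for instance the initial $\mathbf K^0$ of Algorithm \ref{palm-alg1:alg}), so $\tilde g$ is proper. For lower semicontinuity the two ingredients I would assemble are, first, that $J$ is continuous on the open set of stabilizing gains --- it depends real-analytically on $\mathbf K$ through $\mathbf A_{cl}(\mathbf K)=\mathbf A-\mathbf{BKC}$ and the continuous dependence of the unique Lyapunov solution on its coefficients --- and, second, that $T_\infty$, extended to $+\infty$ for non-stabilizing gains, is lower semicontinuous, since it is continuous on the stabilizing region and diverges as a closed-loop pole approaches the imaginary axis; Assumption \ref{controllable:assumption} is invoked here to rule out hidden-mode cancellations so that this divergence is genuine and the sublevel sets of $T_\infty$ are therefore closed.

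The delicate point --- and the step I would be most careful about --- is the indicator $g$ in (\ref{palm-gK:eq}). As literally written with the \emph{strict} bound $T_\infty(\mathbf K)<\gamma$, the set $\mathcal K(\gamma)$ of (\ref{palm-kappa:eq}) is open, and the indicator of an open set fails to be lower semicontinuous at its boundary: at a gain with $T_\infty=\gamma$ the closed loop is still stable and $J$ is finite, yet $g=+\infty$ while the values approaching from inside remain finite, so $\liminf$ undershoots. The resolution I would adopt is to read the feasibility constraint through the closed sublevel set $\{\mathbf K: T_\infty(\mathbf K)\le\gamma\}$; since $T_\infty$ is lower semicontinuous this set is closed, its indicator is lower semicontinuous, and I would then verify $\tilde g=J+g$ directly by the sequential characterization. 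Along any $\mathbf K_n\to\mathbf K$, the terms with $\mathbf K_n$ infeasible contribute $+\infty$, while those that stay feasible remain in the open stabilizing region where $J$ is continuous, so $\liminf_n\tilde g(\mathbf K_n)\ge\tilde g(\mathbf K)$ whether or not $\mathbf K$ itself is feasible. This closed-constraint reading coincides with the strict one on the interior that Algorithms \ref{palm-alg1:alg} and \ref{palm-Kmin:alg} actually traverse, so it changes nothing for the iterates while supplying exactly the lower semicontinuity the PALM convergence theory demands.
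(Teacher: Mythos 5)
Your decomposition and your treatment of properness and of $f$ essentially match the paper's proof: the paper likewise notes that $J$ is nonnegative and finite on stabilizing gains and continuous there (citing the literature), and that $f$ is the indicator of the cardinality set, deferring its lower semicontinuity to Bolte et al.; your finite-union-of-closed-coordinate-subspaces argument simply makes that citation self-contained. Where you genuinely diverge is on the indicator $g$ of (\ref{palm-gK:eq}). The paper's proof asserts that ``$g(\mathbf K)$ is an indicator function of an open set, thus it is lower semicontinuous,'' which is false as stated: an indicator function is lower semicontinuous precisely when the underlying set is \emph{closed}, and the strict sublevel set $\mathcal{K}(\gamma)$ in (\ref{palm-kappa:eq}) is open, so $g$ fails lower semicontinuity exactly at boundary gains with $T_{\infty}(\mathbf K)=\gamma$, as you observe. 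Your repair --- replacing the constraint by the closed sublevel set $\{\mathbf K : T_{\infty}(\mathbf K)\leq\gamma\}$, arguing that $T_{\infty}$ extended by $+\infty$ off the stabilizing region is lower semicontinuous so that this set is closed, and then verifying the sequential $\liminf$ inequality for the sum $J+g$ directly --- is the standard remedy, and your observation that the iterates of Algorithms \ref{palm-alg1:alg} and \ref{palm-Kmin:alg} remain in the interior, so nothing changes algorithmically, is the right justification for the substitution. The one step you should make fully explicit is the closedness of $\{T_{\infty}\leq\gamma\}$ across the stability boundary: the claim that the $H_{\infty}$ norm diverges as a closed-loop pole approaches the imaginary axis requires that the limiting mode be controllable/observable in the $(\mathbf w_1,\mathbf z_1)$ channel, which is where Assumption \ref{controllable:assumption} must be invoked; you flag this but do not prove it. Net effect: your argument is more careful than the paper's own at the only step where the lemma is actually delicate, and it yields the lower semicontinuity that the PALM convergence theory requires, whereas the paper's stated justification for $g$ does not.
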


\begin{proof}
In problem (\ref{Pc}) the function $J(\mathbf K)$ is the LQR cost when using the feedback gain $\mathbf K$. Clearly $f(\mathcal K) > -\infty$, and $J(\mathbf K) < +\infty$ if $\mathbf K$ is stabilizing, thus function $J$ is proper. In addition, $J(\mathbf K)$ is continuous in $\mathbf K$ \citelatex{rautert1997computationalsupp}, and therefore lower semicontinuous.

 The function $g(\mathbf K)$ in (\ref{palm-gK:eq}) is the indicator function for the level set $\mathcal{K}(\gamma)$ in (\ref{palm-kappa:eq}), and thus can take either $0$ or $+\infty$, with $g(\mathbf K)=0$ whenever $\mathbf K \in \mathcal{K}(\gamma)$. Thus, $\tilde g(\mathbf K)$ is proper. In addition, $g(\mathbf K)$ is an indicator function of an open set, thus it is lower semicontinuous.
 Given $J$ and $g$ are both proper and lower semicontinuous, the summation $\tilde g = J + g$ is proper and lower semicontinuous. Similarly, $f(\mathbf F)$ in (\ref{palm-fF:eq}) is a proper function. Moreover, it is shown in \citelatex{bolte2014proximalsupp} that it is lower semicontinuous.
\end{proof}

\begin{lem}
\label{diff:lem}
$H: \mathbb{R}^{m\times p} \times \mathbb{R}^{m\times p}\rightarrow \mathbb{R}$ is a continuously differentiable function, i.e., $H\in C^1$.
\end{lem}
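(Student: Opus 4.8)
The plan is to observe that $H(\mathbf K, \mathbf F) = \frac{\rho}{2}||\mathbf K - \mathbf F||_F^2$ is simply a quadratic polynomial in the matrix entries, so that $C^1$-smoothness (indeed $C^\infty$-smoothness) follows by direct inspection rather than by any limiting argument. First I would expand the Frobenius norm in coordinates, writing $H(\mathbf K, \mathbf F) = \frac{\rho}{2}\sum_{i=1}^{m}\sum_{j=1}^{p}(K_{ij}-F_{ij})^2$, which exhibits $H$ as a finite sum of squares of affine functions of the $2mp$ real scalar variables $\{K_{ij}\}$ and $\{F_{ij}\}$. Any such real polynomial is infinitely differentiable on all of $\mathbb{R}^{m\times p}\times\mathbb{R}^{m\times p}$, and is in particular continuously differentiable.

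To make the statement self-contained, I would also exhibit the gradient explicitly. Differentiating entrywise gives $\partial H/\partial K_{ij} = \rho(K_{ij}-F_{ij})$ and $\partial H/\partial F_{ij} = \rho(F_{ij}-K_{ij})$, which in matrix form are exactly the partial gradients $\nabla_{\mathbf K}H = \rho(\mathbf K - \mathbf F)$ and $\nabla_{\mathbf F}H = \rho(\mathbf F - \mathbf K)$ already recorded in (\ref{gradH:eq}). Since each of these partial gradients is an affine, hence continuous, map of the pair $(\mathbf K, \mathbf F)$, the full gradient $\nabla H$ is continuous on the whole domain, which is precisely what $H\in C^1$ asserts.

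There is essentially no analytic obstacle in this lemma; the content is entirely bookkeeping between the matrix-calculus notation and the coordinatewise derivatives. The only point worth flagging is that the same computation delivers, for free, the global Lipschitz continuity of $\nabla H$ needed for the PALM step-size rule: because $\nabla H$ is linear in $(\mathbf K, \mathbf F)$, it is Lipschitz with constant $\rho$, consistent with the value $L_1(\mathbf K^k)=\rho$ invoked immediately after (\ref{gradH:eq}).
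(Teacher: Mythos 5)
Your proof is correct and follows essentially the same route as the paper, which simply notes that the gradient of $H$ given in (\ref{gradH:eq}) is continuous in $(\mathbf K, \mathbf F)$ and concludes $H\in C^1$. Your coordinatewise expansion and the remark on Lipschitz continuity of $\nabla H$ are just a more explicit spelling-out of the same observation.
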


\begin{proof}
The gradient of $H(\mathbf K, \mathbf F)$ (\ref{gradH:eq}) is countinous in $\mathbf K, \mathbf F$. Thus, $H \in C^1$. 
\end{proof}

\begin{lem}
\label{lip:lem}
(i) $\inf_{\mathbb{R}^{m\times p}, \mathbb{R}^{m\times p}} \Phi > {-\infty}$, $\inf_{\mathbb{R}^{m\times p}} f > -\infty$, and $\inf_{\mathbb{R}^{m \times p}} \tilde g > -\infty$, where $\Phi$ is given by (\ref{Phi:eq}). \\
(ii) The partial gradient $\nabla_{\mathbf K} H(\mathbf K, \mathbf F)$ is globally Lipschitz with moduli $L_1(\mathbf F)$, that is \citelatex{bolte2014proximalsupp},
\begin{equation*}
\vert| \nabla_{\mathbf K} H(\mathbf K_1, \mathbf F) - \nabla_{\mathbf K} H(\mathbf K_2, \mathbf F) \vert| \leq L_1(\mathbf F) \vert| \mathbf K_1 - \mathbf K_2 \vert|.
\end{equation*}
\noindent Likewise, the partial gradient $\nabla_{\mathbf F} H(\mathbf K, \mathbf F)$ is globally Lipschitz with moduli $L_2(\mathbf K)$. \\
(iii) There exist bounds $\lambda_i^-$, $\lambda_i^+$, $i=1,2$ such that 
\begin{align}
\inf\{L_1(\mathbf F^k): k\in \mathbb{N}\} \geq \lambda_1^-, \inf\{L_2(\mathbf K^k):k\in \mathbb{N} \} \geq \lambda_2^- \nonumber\\
\sup \{L_1(\mathbf F^k): k\in \mathbb{N}\} \leq \lambda_1^+, \sup \{L_2(\mathbf K^k):k\in \mathbb{N} \} \leq \lambda_2^+
\end{align}
\noindent (iv) $\nabla H \triangleq (\nabla_{\mathbf K}H, \nabla_{\mathbf F}H)$ is Lipschitz continuous \citelatex{luenberger1984linearsupp} on bounded subsets of $\mathbb{R}^{m\times p} \times \mathbb{R}^{m \times p}$. That is, for each bounded subset $\mathcal{B}_1 \times \mathcal{B}_2$ of $\mathbb{R}^{m\times p} \times \mathbb{R}^{m \times p}$ there exists $M>0$, such that for all $(\mathbf K_i, \mathbf F_i) \in (\mathcal{B}_1, \mathcal{B}_2)$, 
\begin{eqnarray}
\label{global_Lip:eq}
\vert| \nabla_{\mathbf K}H(\mathbf K_1, \mathbf F_1) - \nabla_{\mathbf K}H(\mathbf K_2, \mathbf F_2) \vert|_F^2 \nonumber\\
+ \vert| \nabla_{\mathbf F}H(\mathbf K_1, \mathbf F_1) - \nabla_{\mathbf F}H(\mathbf K_2, \mathbf F_2) \vert|_F^2 \nonumber\\
\leq M  (\vert| \mathbf K_1 - \mathbf K_2 \vert|_F^2 + \vert| \mathbf F_1 - \mathbf F_2\vert|_F^2 )
\end{eqnarray}
\end{lem}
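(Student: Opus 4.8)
The plan is to exploit the fact that the coupling function $H(\mathbf K, \mathbf F) = \frac{\rho}{2}\|\mathbf K - \mathbf F\|_F^2$ in (\ref{H:eq}) is a quadratic whose partial gradients (\ref{gradH:eq}) are \emph{affine} in $(\mathbf K, \mathbf F)$; this reduces every Lipschitz-type claim to an elementary norm estimate, while the boundedness-from-below claim in (i) follows from the nonnegativity of each summand of $\Phi$. For part (i), I would note that every term in $\Phi = J + g + f + H$ of (\ref{Phi:eq}) is nonnegative: the indicator functions $f$ and $g$ take values in $\{0, +\infty\}$, and $H \geq 0$ as a scaled squared Frobenius norm. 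The $H_2$ cost $J(\mathbf K)$ in (\ref{PALM-J-centra:eq}) is the integral of the quadratic form $\mathbf x^T \mathbf Q \mathbf x + \mathbf u^T \mathbf R \mathbf u$ with $\mathbf Q \succeq 0$, $\mathbf R \succ 0$ from (\ref{qr:eq}), hence $J \geq 0$ wherever it is finite, which (by Lemma~\ref{lowsemi:lem}) holds on the stabilizing set containing the domain of $g$. Consequently $\inf \Phi \geq 0$, $\inf f = 0$, and $\inf \tilde g \geq 0$, each strictly greater than $-\infty$.

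For part (ii), I would compute directly from (\ref{gradH:eq}):
\[
\nabla_{\mathbf K} H(\mathbf K_1, \mathbf F) - \nabla_{\mathbf K} H(\mathbf K_2, \mathbf F) = \rho(\mathbf K_1 - \mathbf K_2),
\]
so $L_1(\mathbf F) = \rho$ is a valid (indeed sharp) modulus, and symmetrically $L_2(\mathbf K) = \rho$. Part (iii) is then immediate, since these moduli are the \emph{constant} $\rho$, independent of the iterate: every sequence $\{L_1(\mathbf F^k)\}$ and $\{L_2(\mathbf K^k)\}$ is identically $\rho$, so one may take $\lambda_i^- = \lambda_i^+ = \rho$ for $i=1,2$.

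For part (iv), writing $\Delta \mathbf K := \mathbf K_1 - \mathbf K_2$ and $\Delta \mathbf F := \mathbf F_1 - \mathbf F_2$, the affine form of the gradients makes both partial differences equal to $\pm\rho(\Delta \mathbf K - \Delta \mathbf F)$, so the left-hand side of (\ref{global_Lip:eq}) equals $2\rho^2 \|\Delta \mathbf K - \Delta \mathbf F\|_F^2$. The elementary bound $\|\Delta \mathbf K - \Delta \mathbf F\|_F^2 \leq 2\|\Delta \mathbf K\|_F^2 + 2\|\Delta \mathbf F\|_F^2$ then yields (\ref{global_Lip:eq}) with $M = 4\rho^2$, and in fact this holds globally rather than merely on bounded subsets. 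I do not anticipate a genuine obstacle: the only point needing care is the finiteness/nonnegativity bookkeeping for $J$ in part (i), since $J = +\infty$ for non-stabilizing $\mathbf K$, so one must restrict to the domain where the $H_\infty$ indicator $g$ vanishes before concluding $J \geq 0$ there.
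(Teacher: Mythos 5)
Your proposal is correct and follows essentially the same route as the paper's own proof: nonnegativity of each summand of $\Phi$ for (i), the observation that $L_1(\mathbf F)=L_2(\mathbf K)=\rho$ from the affine gradients in (\ref{gradH:eq}) for (ii)--(iii), and the identical computation $2\rho^2\Vert(\mathbf K_1-\mathbf K_2)-(\mathbf F_1-\mathbf F_2)\Vert_F^2\leq 4\rho^2(\Vert\mathbf K_1-\mathbf K_2\Vert_F^2+\Vert\mathbf F_1-\mathbf F_2\Vert_F^2)$ giving $M=4\rho^2$ for (iv). Your extra care about restricting $J$ to the stabilizing set where $g$ vanishes is a minor refinement of the paper's terser ``$J(\mathbf K)>0$'' but does not change the argument.
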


\begin{proof} 
(i)--(iv) are stated as assumptions in \citelatex{bolte2014proximalsupp}. We show that these assumptions hold for our sparsity-constrained mixed $H_2/H_{\infty}$ problem. It is easy to see that (i) holds since $f$ and $g$ are indicator functions. Since $J$ is the LQR performance index, $J(\mathbf K)>0$. Thus $\tilde g(\mathbf K) > -\infty$. In (\ref{H:eq}), $H(\mathbf K, \mathbf F) \geq 0$, so $\Phi(\mathbf K, \mathbf F)>-\infty$. 
Properties (ii) and (iii) require the partial gradient of $H$ to be globally Lipschitz, and the Lipschitz constant be upper and lower bounded, which is easy to verify since $L_1(\mathbf F^k)=L_2(\mathbf K^k)=\rho$ (\ref{gradH:eq}).
Property (iv) holds since the left-hand side of (\ref{global_Lip:eq}) can be expressed as: $LHS = 2\rho^2\vert| (\mathbf K_1 - \mathbf K_2) - (\mathbf F_1 - \mathbf F_2)\vert|_F^2 \leq 4\rho^2 (\vert| \mathbf K_1 - \mathbf K_2 \vert|_F^2 + \vert| \mathbf F_1 - \mathbf F_2 \vert|_F^2)$.
\end{proof}

\begin{assumption}
\label{KL:assumption}
Function $J$ is a semi-algebraic function \citelatex{bolte2014proximalsupp}.
\end{assumption}

\begin{lem}
\label{KL:lem}
The objective function $\Phi$ of (\ref{Pc}) is a Kurdyka-{\L}ojasiewicz (KL) function \citelatex{bolte2014proximalsupp}.
\end{lem}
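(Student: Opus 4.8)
The plan is to exploit the standard fact, established in \citelatex{bolte2014proximalsupp}, that every proper and lower semicontinuous \emph{semi-algebraic} function automatically satisfies the \gls{KL} property. Since Lemmas \ref{lowsemi:lem}--\ref{diff:lem} already guarantee that the objective $\Phi$ in (\ref{Phi:eq}) is proper and lower semicontinuous, the entire task reduces to showing that $\Phi$ is semi-algebraic. I would then invoke the closure of the semi-algebraic class under finite sums \citelatex{bolte2014proximalsupp}: because $\Phi = J + g + f + H$, it suffices to verify semi-algebraicity of each of the four summands separately.

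First I would dispose of the three easy terms. The $H_2$-cost $J$ is semi-algebraic directly by Assumption \ref{KL:assumption}. The coupling term $H = \frac{\rho}{2}\|\mathbf K - \mathbf F\|_F^2$ in (\ref{H:eq}) is a quadratic polynomial in the matrix entries, hence trivially semi-algebraic. For the cardinality indicator $f$ in (\ref{palm-fF:eq}), I would observe that its domain $\{\mathbf F : \mathrm{card}(\mathbf F)\le s\}$ is a finite union, over index sets $S$ of size at most $s$, of coordinate subspaces $\{\mathbf F : F_{ij}=0 \ \text{for}\ (i,j)\notin S\}$; each subspace is the zero set of finitely many polynomials, a finite union of semi-algebraic sets is semi-algebraic, and the indicator of a semi-algebraic set (whose graph is that set crossed with $\{0\}$) is a semi-algebraic function.

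The crux is the $H_\infty$ indicator $g$ in (\ref{palm-gK:eq}), namely the indicator of the level set $\mathcal K(\gamma)$ in (\ref{palm-kappa:eq}). Here I would route through the LMI equivalence (\ref{LMI_inf:eq}): a gain $\mathbf K$ lies in $\mathcal K(\gamma)$ if and only if there exists a symmetric $\mathbf X$ making (\ref{LMI_inf:eq}) hold. The entries of that block matrix are affine in $(\mathbf K, \mathbf X)$, since $\mathbf A_{cl}(\mathbf K)$ and $\mathbf C_{cl1}(\mathbf K)$ are affine in $\mathbf K$, and by Sylvester's criterion the strict definiteness conditions $\mathbf X \succ 0$ and the $\prec 0$ block inequality translate into a finite system of strict polynomial inequalities on leading principal minors. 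Thus $\{(\mathbf K, \mathbf X) : (\ref{LMI_inf:eq})\}$ is semi-algebraic, and $\mathcal K(\gamma)$ is its projection onto the $\mathbf K$-coordinates; the Tarski--Seidenberg theorem (closure of semi-algebraic sets under projection \citelatex{bolte2014proximalsupp}) then makes $\mathcal K(\gamma)$, and hence $g$, semi-algebraic.

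Assembling the pieces, $\Phi$ is a finite sum of semi-algebraic functions, so it is itself semi-algebraic, and together with the properness and lower semicontinuity already supplied by Lemmas \ref{lowsemi:lem}--\ref{diff:lem} the cited result delivers the \gls{KL} property. I expect the main obstacle to be precisely the term $g$: unlike $J$, $f$, and $H$, it is specified implicitly through an $H_\infty$-norm with no manifest polynomial description, so the argument must pass through the LMI reformulation and eliminate the auxiliary certificate $\mathbf X$ by quantifier elimination rather than by exhibiting an explicit semi-algebraic representation.
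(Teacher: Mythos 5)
Your overall strategy coincides with the paper's: decompose $\Phi = J+g+f+H$, verify each summand is semi-algebraic ($J$ by Assumption \ref{KL:assumption}, $H$ as a polynomial, $f$ as the indicator of the cardinality set), invoke closure of semi-algebraic functions under finite sums, and conclude KL from the semi-algebraic-implies-KL theorem of \citelatex{bolte2014proximalsupp}. The one place you genuinely diverge is the term you correctly identify as the crux, namely $g$. The paper does not argue about $\mathcal K(\gamma)$ directly; it replaces it by the convex inner approximation $\hat{\mathcal K}(\gamma_0)$ of (\ref{LMIsuff:eq}) and cites the fact that an LMI-representable (spectrahedral) set is semi-algebraic — which strictly speaking establishes semi-algebraicity of the indicator of $\hat{\mathcal K}(\gamma_0)$ rather than of $g$ itself. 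You instead work with the exact level set via the necessary-and-sufficient LMI of (\ref{LMI_inf:eq}): the feasible pairs $(\mathbf K,\mathbf X)$ form a semi-algebraic set (via Sylvester-type minor conditions) and $\mathcal K(\gamma)$ is its projection, hence semi-algebraic by Tarski--Seidenberg. This is arguably the cleaner and more faithful argument, since it avoids substituting an approximation for the actual constraint set; the price is that you must quantify over, and then eliminate, the certificate $\mathbf X$. One small correction: the block matrix in (\ref{LMI_inf:eq}) is not affine in the pair $(\mathbf K,\mathbf X)$ — terms such as $\mathbf A_{cl}(\mathbf K)\mathbf X$ are bilinear — but this is harmless, because all your argument needs is that the entries are polynomial in $(\mathbf K,\mathbf X)$, which they are.
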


\begin{rem}
A broad class of functions satisfy the semi-algebraic property, including polynomial functions, $\ell_0$-norm function and indicator function of positive semidefinite cones \citelatex{bolte2014proximalsupp}. 
The function $f(\mathbf F)$ is the indicator function for the semi-algebraic set $\{\mathbf F \vert \mathrm{card}(\mathbf F) \leq s \}$. Thus, function $f$ is semi-algebraic \citelatex{lin2017cosupp,bolte2014proximalsupp}. The function $g(\mathbf K)$ is the indicator function for the level set $\mathcal{K}(\gamma)$, which is approximated by the convex level set $\hat{\mathcal{K}}(\gamma_0)$, represented by the LMI (\ref{LMIsuff:eq}), and $\hat{\mathcal{K}}(\gamma_0)$ is a semi-algebraic set \citelatex{netzer2016realsupp}. The coupling function $H$ is polynomial, so it is semi-algebraic \citelatex{bolte2014proximalsupp}. Moreover, $J$ is a semi-algebraic function by Assumption \ref{KL:assumption}. Thus, each term of $\Phi$ is semi-algebraic, and since a finite sum of semi-algebraic functions is also semi-algebraic, $\Phi$ is semi-algebraic. It is shown in Theorem 5.1 in \citelatex{bolte2014proximalsupp} that a semi-algebraic function satisfies the KL property at any point in its domain. Thus, $\Phi$ is KL.

\end{rem}

It has been proved in \citelatex{bolte2014proximalsupp} that if Lemma \ref{lowsemi:lem}--\ref{lip:lem} hold, then
the sequence generated by PALM algorithm globally converges. In addition, if Lemma \ref{KL:lem} holds, the sequence converges to a critical point \citelatex{bolte2014proximalsupp} of $\Phi$. This confirms convergence of Algorithm \ref{palm-alg1:alg} to a sparsity-constrained mixed $H_2/H_{\infty}$ controller, which corresponds to a critical point of $\Phi$ under mild assumptions on the functions $J$ and $g$.

\bibliographystylelatex{apalike}
\bibliographylatex{palmref2}

\end{document}